\newlang{\StateHSP}{StateHSP}
\newlang{\HSP}{HSP}
\newlang{\negl}{negl}
\newcommand{\complexityclass}[1]{{\mathsf{#1}}\xspace}
\renewcommand{\QMA}{\complexityclass{QMA}}
\renewcommand{\QCMA}{\complexityclass{QCMA}}
\def\C{{\mathbb{C}}} % Complex Numbers
\def\Z{{\mathbb{Z}}} % Integers
\newcommand{\bbS}{\mathbb{S}} % symmetric group
\renewcommand{\E}{\mathop{\mathbb{E}\/}}
\newcommand{\Ber}{\mathrm{Ber}} %Bernoulli
\renewcommand{\rank}{\mathrm{rank}}
\NewCommandCopy{\dashl}{\l} % Polish character ł
\renewcommand{\l}{\ell} % nicer-looking than l
\renewcommand{\tilde}{\widetilde}
\newcommand{\B}{\{0,1\}}
\newcommand{\calM}{\mathcal{M}}
\DeclareMathOperator{\MSet}{MSet}
\DeclareMathOperator{\Symop}{Sym}
\DeclareMathOperator{\Haarop}{Haar}
\DeclareMathOperator{\Irrop}{Irr}
\newcommand{\Sym}[2]{\Symop^{#1}\left(#2\right)}
\newcommand{\Haar}[1]{\Haarop\left(#1\right)}
\newcommand{\TD}[1]{\mathrm{TD}\/\left[#1\right]}
\newcommand{\Sunique}{\Sigma_\mathrm{unique}}
\newcommand{\Irr}[1]{\Irrop\left[#1\right]}
\newtheoremstyle{thrmstyle}
  {0pt} % Space above
  {0pt} % Space below
  {\em} % Body font
  {} % Indent amount
  {\bfseries} % Theorem head font
  {.} % Punctuation after theorem head
  {.5em} % Space after theorem head
  {} % Theorem head spec
\theoremstyle{thrmstyle}
\newtheorem{theorem}{Theorem}
\newtheorem{corollary}{Corollary}
\newtheorem{lemma}{Lemma}
\newtheorem{fact}{Fact}
\newtheorem{definition}{Definition}
\newtheoremstyle{definition_new}
  {10pt} % Space above
  {5pt} % Space below
  {} % Body font
  {} % Indent amount
  {\bfseries} % Theorem head font
  {.} % Punctuation after theorem head
  {.5em} % Space after theorem head
  {} % Theorem head spec
\theoremstyle{definition_new}
\newtheorem*{observation}{Observation}
\newtheorem*{remark}{Remark}
\newtheorem{step}{Step}
\renewenvironment{proof}{\emph{\bfseries Proof.}}{\qed\vspace{5pt}}
\newlength{\gridboxsize}
\titleformat*{\section}{\large\bfseries}
\titleformat*{\subsection}{\normalsize\bfseries}
\titleformat*{\subsubsection}{\normalsize\bfseries}
\titleformat*{\paragraph}{\normalsize\bfseries}
\titleformat*{\subparagraph}{\normalsize\bfseries}
 \title{\vspace{-20pt}\large\bf Pseudorandomness from Subset States}
 \author[]{\normalsize Tudor Giurgic\u{a}-Tiron\thanks{\href{mailto:tgt@stanford.edu}{tgt@stanford.edu}} }
 \author[]{\normalsize Adam Bouland\thanks{\href{mailto:abouland@stanford.edu}{abouland@stanford.edu}}}
 \affil{\vspace{-5pt}\small Stanford University}
\date{}
\begin{document}

\maketitle
\vspace{-10pt}

\begin{abstract}\noindent We show it is possible to obtain quantum pseudorandomness and pseudoentanglement from random subset states --- i.e. quantum states which are equal superpositions over (pseudo)random subsets of strings. 
This answers an open question of Aaronson et al. \cite{aaronson2022quantum}, who devised a similar construction augmented by pseudorandom phases. 
Our result follows from a direct calculation of the trace distance between $t$ copies of random subset states and the Haar measure, via the representation theory of the symmetric group. We show that the trace distance is negligibly small, as long as the subsets are of an appropriate size which is neither too big nor too small. 
In particular, we analyze the action of basis permutations on the symmetric subspace, and show that the largest component is described by the Johnson scheme: the double-cosets of the symmetric group $\bbS_N$ by the subgroup $\bbS_t \times \bbS_{N-t}$. 
The Gelfand pair property of this setting implies that the subset state density matrix eigenbasis coincides with the symmetric group irreducible blocks, with the largest eigenblock asymptotically approaching the Haar average. 
An immediate corollary of our result is that quantum pseudorandom and pseudoentangled state ensembles do not require relative phases.
\end{abstract}

\vspace{10pt}

\section{Introduction}

Subset states are a natural quantum combinatorial construction, representing uniform superpositions over the subsets of an orthonormal basis.
\begin{definition}
The {\em subset state} $\ket{S}$ associated with the subset $S\subseteq[N]$ is defined by:
    \begin{equation}\label{eq:subsetstatedef}
    \ket{S} \equiv \frac{1}{\sqrt{\abs{S}}}\displaystyle\sum_{x\in S} \ket{x}\,,
    \end{equation}
\end{definition}
where $\{\ket{x}\}_{x\in[N]}$ denotes the computational basis of an $n$-qubit system, such that $[N]=\B^n$.
Such states have found numerous applications, for example in $\QMA$ protocols for group non-membership \cite{watrous2000succinct}, as general $\QMA$ witness-approximators \cite{grilo2015qma}, in building oracle separations between $\QMA$ and $\QCMA$ \cite{fefferman2015quantum}, and in proposals for quantum money protocols \cite{aaronson2012quantum}. 

Recently a variant of subset states arose in the study of quantum pseudorandomness. In particular, Aaronson et al. \cite{aaronson2022quantum} introduced a variation known as the subset-phase state:

\begin{definition}
A {\em subset-phase state} is a state specified by both a subset $S\subseteq [N]$ and a Boolean function $f:S\rightarrow \B$, defined as:
\begin{equation}
\ket{S, f} \equiv \frac{1}{\sqrt{|S|}}\displaystyle\sum_{x\in S} (-1)^{f(x)}\ket{x}\,.
\end{equation}  
\end{definition}

That is, a subset-phase state is simply a subset state which has been augmented with relative phases between the subset elements in the superposition. The main result of
Aaronson et al. \cite{aaronson2022quantum} is that pseudorandom subset-phase states of superpolynomial subset size\footnote{I.e., 
if the set $S$ is a pseudorandom subset of superpolynomial size, and the function $f$ is a pseudorandom function (PRF).} are a pseudorandom state ensemble.
In other words, such subset-phase states are both efficiently preparable and computationally indistinguishable from the Haar measure, even given access to an arbitrary polynomial number of copies of the state \cite{ji2018pseudorandom}.
This construction allowed them to derive a number of results.
For example, since the subset size is an upper bound on the Schmidt rank across arbitrary cuts, this construction showed that it is possible to have pseudorandom state ensembles with very low entanglement across every cut of the system simultaneously, saturating the prior lower bound \cite{ji2018pseudorandom}.
It also led to the definition of pseudoentangled state ensembles \cite{aaronson2022quantum}, which are ensembles of states with differing entanglement structures which are nevertheless computationally indistinguishable.

A natural question, which was posed as an open problem in \cite{aaronson2022quantum}, is whether or not the phases can be removed from the construction. That is, do pseudorandom subset states already form a pseudorandom ensemble? This is far from obvious because phases have provided the main mechanism for constructing pseudorandom states in the literature, going back to the original construction of Ji, Liu, and Song \cite{ji2018pseudorandom}, and to the simplification due to Brakerski and Shmueli \cite{brakerski2019pseudo}.
The core result of these prior works was an information-theoretic statement: the Haar measure can be approximated by truly random binary phase states --- namely, the special case of subset-phase states $\ket{S, f}$ for which the subset is the full set $S=[N]$.
Pseudorandomness is obtained from this construction by a simple hybrid argument, substituting pseudorandom phases for the truly random phases.
This prior result was used in a critical way in Aaronson et al.'s proof, which proceeded by a similar information-theoretic calculation, in two steps. 
First, the phase state result can be applied at the level of each subset $S \in [N]$, such that averaging over the phase function $f$ provides proximity to the Haar measure restricted to a single subset.
Second, Aaronson et al. show that subsequently performing the average over subsets provides enough uniform coverage of the symmetric subspace to approximate the Haar measure. Without random phases, the first step of this argument does not work, so new techniques are needed to approach this setting.

\subsection{Our result}

In this work, we answer this open problem in the affirmative. Our main result is that under a uniform distribution over subsets of size $m$, random subset states form approximate designs, as long as the subsets are of an appropriate size. Formally, we prove:

\begin{theorem}\label{thrm:randomsubsetstates}
        Let $t=O(\poly(n))$, let $m$ be in the range $\omega(\poly(n)) < m < o\left(2^n\right)$, and let $N=2^n$. Let $\TD{\cdot, \cdot}$ denote the trace distance between two matrices. Then we have that:
        \begin{equation}\label{eq:tracedistance}
            \TD{\E_{S \in \binom{[N]}{m}}\dyad{S}^{\otimes t},\;\E_{\ket\phi\sim\Haar{[N]}}\dyad{\phi}^{\otimes t}}\leq O\left(\frac{tm}{N}\right) + O\left(\frac{t^2}{m}\right)\,.
        \end{equation}
\end{theorem}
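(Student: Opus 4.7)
The plan is to exploit the $\bbS_N$-invariance of the subset average to reduce the trace distance to scalar comparisons on each $\bbS_N$-irreducible block of the symmetric subspace. I first observe that $\rho_t := \E_S \dyad{S}^{\otimes t}$ is supported on $\Sym{t}{\C^N}$ and is invariant under the left action of $\bbS_N$ by permutation of the computational basis, since the distribution over $m$-subsets is uniform. Because the Haar average $\E_{\ket{\phi}}\dyad{\phi}^{\otimes t} = \Pi_{\mathrm{sym}}/\binom{N+t-1}{t}$ shares both symmetries, the trace distance may be computed block-by-block in the $\bbS_N$-isotypic decomposition of $\Sym{t}{\C^N}$.

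The next step is to split $\Sym{t}{\C^N} = V_{\mathrm{dist}} \oplus V_{\mathrm{rep}}$, where $V_{\mathrm{dist}}$ is the span of symmetrized basis tensors indexed by $t$-element subsets of $[N]$ (so naturally isomorphic to $\C[\binom{[N]}{t}]$), and $V_{\mathrm{rep}}$ collects the multisets with at least one repetition. A direct calculation gives $\|\Pi_{V_{\mathrm{rep}}}\ket{S}^{\otimes t}\|^2 = 1 - m(m-1)\cdots(m-t+1)/m^t = O(t^2/m)$, so the $V_{\mathrm{rep}}$ part of $\rho_t$ contributes at most $O(t^2/m)$ to the trace distance, which is precisely the second error term (the Haar mass on $V_{\mathrm{rep}}$ is even smaller, of order $t^2/N$). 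The comparison then reduces to the restriction to $V_{\mathrm{dist}}$, on which the $\bbS_N$-action is the classical Johnson-scheme permutation module $\C[\bbS_N/(\bbS_t\times\bbS_{N-t})]$. The Gelfand-pair property of $(\bbS_N,\bbS_t\times\bbS_{N-t})$ implies a multiplicity-free decomposition $\bigoplus_{k=0}^{t} S^{(N-k,k)}$ into two-row Specht modules, and Schur's lemma forces both operators to act as scalars on each such block.

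Finally, I would compute these scalars explicitly. The matrix element $\bra{T}\rho_t\ket{T'}$ between symmetrized $t$-subset basis vectors evaluates to $\tfrac{t!}{m^t}\binom{N-|T\cup T'|}{m-|T\cup T'|}/\binom{N}{m}$, which depends only on the overlap $|T\cap T'|$; this places the restriction of $\rho_t$ in the Bose-Mesner algebra of the Johnson scheme and identifies its eigenvalue $\lambda_k$ on $S^{(N-k,k)}$ with an evaluation of the corresponding zonal spherical (Eberlein/dual-Hahn) function, whereas the Haar operator has eigenvalue $1/\binom{N+t-1}{t}$ on every block. The main obstacle will be bounding $\sum_{k=0}^{t}\dim S^{(N-k,k)}\cdot|\lambda_k - 1/\binom{N+t-1}{t}|$ by $O(tm/N)$. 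I would attack this by expanding $\lambda_k$ asymptotically in the small parameter $m/N$: the off-diagonal matrix entry of order $(m/N)^{2t-j}$ at overlap $j$ gets controlled against the dimensions $\dim S^{(N-k,k)} = \binom{N}{k}-\binom{N}{k-1} \approx N^k/k!$, and the constraint $m=o(2^n)$ ensures the resulting sum telescopes geometrically to the claimed $O(tm/N)$ leading term.
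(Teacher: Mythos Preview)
Your framework matches the paper's: both restrict to the unique-type subspace (your $V_{\mathrm{dist}}$, the paper's $\Sigma_{\mathrm{unique}}$), identify it with the Johnson scheme $\C[\bbS_N/(\bbS_t\times\bbS_{N-t})]$, and diagonalize via the Gelfand-pair spherical functions. The matrix-entry formula you write down is exactly the paper's.

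There is, however, a genuine gap in your reduction step. From $\|\Pi_{V_{\mathrm{rep}}}\ket{S}^{\otimes t}\|^2=O(t^2/m)$ you conclude that ``the $V_{\mathrm{rep}}$ part of $\rho_t$ contributes at most $O(t^2/m)$ to the trace distance.'' This controls the $V_{\mathrm{rep}}$-diagonal block, but not the off-diagonal block $\Pi_{V_{\mathrm{dist}}}\rho_t\Pi_{V_{\mathrm{rep}}}$; note that $V_{\mathrm{dist}}$ and $V_{\mathrm{rep}}$ share $\bbS_N$-irreps (e.g.\ the trivial one), so $\bbS_N$-invariance of $\rho_t$ does not kill this cross term. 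The natural bounds here (gentle measurement, or $\|\Pi_1\rho\Pi_2\|_1\le\sqrt{\Tr[\Pi_1\rho]\Tr[\Pi_2\rho]}$ for $\rho\succeq 0$) only yield $O(t/\sqrt{m})$, which for, say, $m\approx\sqrt{N}$ is strictly larger than the theorem's $O(tm/N)+O(t^2/m)$. The paper closes this with a dedicated lemma (its Lemma~1): if a density matrix is $\delta$-close to maximally mixed on a subspace of relative codimension $\epsilon$, then the full trace distance is at most $2\delta+2\epsilon$. The point is that the off-block piece has \emph{rank} at most $2\dim V_{\mathrm{rep}}=O(t^2/N)\dim\Sym{t}{\C^N}$, and low rank together with $\|\rho_t\|_1=1$ controls its trace norm---small mass alone does not.

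The same lemma also dissolves your ``main obstacle.'' Instead of estimating every $\lambda_k$ and summing $\sum_k\dim S^{(N-k,k)}\,|\lambda_k-1/\binom{N+t-1}{t}|$, the paper computes \emph{only} the eigenvalue on the dominant block $S^{(N-t,t)}$---where the spherical-function sum collapses via a binomial identity to $(1-m/N)^t(1+O(t^2/m))$---and then applies Lemma~1 once more with $H_1=S^{(N-t,t)}$ and $\epsilon=O(t^2/N)$. Your plan of expanding each $\lambda_k$ is workable but unnecessary. (Minor slip: the matrix entry at overlap $j=|T\cap T'|$ scales as $(m/N)^{t-j}$, not $(m/N)^{2t-j}$, since $|T\cup T'|-t=t-j$.)
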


In other words, the trace distance between a random subset state and the Haar measure is negligible for arbitrary polynomial number of copies $t$, as long as the subset size $m$ is in a certain ``Goldilocks'' regime --- not too small (superpolynomial in the number of qubits $n$), but also not too big (superpolynomially smaller than the maximum value $2^n$).

The two terms bounding the trace distance in \eqref{eq:tracedistance} have natural operational interpretations in terms of the success rate of possible distinguishers. The first term describes the `birthday attack': 
a natural way to distinguish subset states from Haar random states is to measure the $t$ copies of the state in the computational basis, and see if one obtains the same outcome more than once.
This occurs with probability $O(t^2/m)$, so if $m$ is small enough relative to $t^2$ there would be a non-negligible probability of this attack succeeding. 
The second term is related to a projective measurement against the $\ket{+^n}=\ket{[N]}$ state --- if the subset size $m$ is too large compared to $N$, the measurement success probability $\abs{\ip{S}{+^n}}^2=m/N$ will be high, which means that the subset state cannot have come from the Haar measure. 

Interestingly, our result implies that the Goldilocks regime for the subset size is as large as possible, given these two simple distinguishing algorithms.
For example, one corollary of our result is that one cannot distinguish the case of subset states with subset size $m=2^{\log^2 n}$ from the case $m=2^{n-\log^2 n}$.

Additionally, the information-theoretic result of theorem \ref{thrm:randomsubsetstates} immediately implies the following corollary:
\begin{corollary}

    Pseudorandom subset states are a pseudorandom and pseudoentangled state ensemble so long as the size of the subsets is superpolynomial in $n$ and superpolynomially less than $2^n$.
    
\end{corollary}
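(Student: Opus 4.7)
The plan is to derive the corollary from Theorem~\ref{thrm:randomsubsetstates} by a standard hybrid argument, in direct analogy with how \cite{ji2018pseudorandom} lift an information-theoretic phase-state result to a computational pseudorandom ensemble; essentially all the weight of the argument is carried by the theorem, and the rest is bookkeeping.

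First I would specify the efficient construction. Assuming quantum-secure one-way functions, let $\pi_k : [N]\to[N]$ be a keyed quantum-secure pseudorandom permutation, and define the pseudorandom subset $S_k = \{\pi_k(1),\ldots,\pi_k(m)\}$ together with the associated subset state
\[
\ket{S_k} \;=\; \frac{1}{\sqrt{m}}\sum_{i=1}^m \ket{\pi_k(i)}\,.
\]
Given $k$, this state is efficiently preparable: initialize a uniform superposition over $\ket{1},\ldots,\ket{m}$, coherently apply $\pi_k$ into a data register, and uncompute the index register using $\pi_k^{-1}$. Pseudorandomness then follows from two hybrids: the first hop replaces $\pi_k$ by a truly random permutation of $[N]$, whose indistinguishability is exactly the post-quantum PRP assumption against a $t=\mathrm{poly}(n)$-query adversary; the second hop invokes Theorem~\ref{thrm:randomsubsetstates} directly, since for $\omega(\mathrm{poly}(n)) < m < o(2^n)$ the trace distance between $\E_S \dyad{S}^{\otimes t}$ and the $t$-copy Haar average is $O(tm/N) + O(t^2/m)$, which is negligible, so no algorithm — efficient or not — distinguishes with non-negligible bias.

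For pseudoentanglement I would instantiate two copies of this construction at widely separated sizes, for example $m_1 = 2^{\log^2 n}$ and $m_2 = 2^{n-\log^2 n}$, both lying inside the Goldilocks window. By the previous paragraph each ensemble is computationally indistinguishable from Haar, so they are indistinguishable from each other by the triangle inequality. On the other hand, the Schmidt rank of any subset state across a bipartition is bounded above by $|S|$, so the $m_1$-ensemble has at most polylogarithmic entanglement entropy across every cut, whereas a random subset of size $m_2$ attains near-maximal Schmidt rank across every balanced cut with overwhelming probability (a routine birthday-style count on the columns of the bipartite amplitude matrix). This polynomial-versus-polylogarithmic entanglement gap, combined with computational indistinguishability, is exactly the pseudoentanglement property.

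The main thing to be careful about — and the closest thing to an obstacle — is the first hybrid: one must check that the PRP evaluation needed for state preparation happens only on classical inputs $\{1,\ldots,m\}$ rather than in superposition, so that the ordinary post-quantum PRP assumption (not the stronger superposition-query version) is enough. Our construction clearly satisfies this, so no real difficulty arises; everything else is textbook.
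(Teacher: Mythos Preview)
Your overall approach matches the paper's: derive the corollary from Theorem~\ref{thrm:randomsubsetstates} via the standard two-step hybrid (PRP $\to$ truly random permutation $\to$ Haar), with efficient preparation supplied by a quantum-secure PRP. The paper's preparation is phrased slightly differently --- Hadamard a subset of qubits to get the fixed subset state $\ket{[m]}$ and then apply the PRP \emph{in place} (possible because the key-holder can invert it) --- but your index/data/uncompute variant is equivalent. Your pseudoentanglement paragraph is also in line with \cite{aaronson2022quantum} and with the paper's observation that both endpoints of the Goldilocks window are admissible.

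There is, however, a concrete error in your final paragraph. You assert that ``the PRP evaluation needed for state preparation happens only on classical inputs $\{1,\ldots,m\}$ rather than in superposition,'' but your own procedure applies $\pi_k$ coherently to the superposition $\frac{1}{\sqrt m}\sum_{i=1}^m \ket{i}$, and then applies $\pi_k^{-1}$ coherently again to uncompute --- these are superposition queries, not classical ones. More to the point, in the first hybrid the reduction builds a distinguisher that, given \emph{oracle} access to either $\pi_k$ or a random permutation, must prepare $t$ copies of the corresponding subset state; that preparation unavoidably queries the oracle on a superposition. So you do need security against quantum (superposition) queries. This is not fatal: the standard notion of quantum-secure PRP from \cite{zhandry2016note}, which the paper explicitly invokes, is precisely security against such queries, so the hybrid goes through --- just not for the reason you give. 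You should drop the claim that classical-query PRP security suffices and instead cite the quantum-query-secure PRP assumption directly.
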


This follows from a direct hybrid argument as in \cite{ji2018pseudorandom} --- as random subset states of the appropriate size are information-theoretically indistinguishable from Haar, pseudorandom subset states are automatically computationally indistinguishable from Haar. Pseudorandom subset states admit efficient preparation by the arguments given in \cite{aaronson2022quantum}, namely one can prepare a fixed subset by Hadamard'ing a subset of qubits, and then applying a quantum-secure pseudorandom permutation (PRP) \cite{zhandry2016note} in place. We note that this is possible since the PRP is invertible given the secret key.

An interesting direct corollary of our result is that quantum pseudorandom states can not only be real \cite{brakerski2019pseudo}, but they also do not even require negative numbers in the state vectors.
This stands in sharp contrast to pseudorandom unitaries, which must have a large imaginary component \cite{haug2023pseudorandom}.

We also achieve a slight generalization of our result which unifies our main theorem and Aaronson et al.'s trace distance calculation for subset-phase states.
In particular, we consider the ensemble of subset-phase states with \emph{biased} phases, i.e. where the phase on each element is $+1$ with probability $\frac{1+\delta}{2}$ and $-1$ with probability $\frac{1-\delta}{2}$.
Subset states correspond to $\delta=1$, and the subset-phase states of Aaronson et al. correspond to $\delta=0$. We obtain an interpolation between the two trace distance bounds as a function of $\delta$, and show that one needs near-maximal `sign entropy' to modify the dependency on the subset size $m$ --- see section \ref{sec:biasedphases} for details. 

Finally, we note that we recently became aware of independent and concurrent work by Fermi Ma proving a similar result \cite{macommunication}.

\subsection{Proof strategy}

Before we lay out the technical details behind our result in section \ref{sec:subsetstateproof}, it is worth providing a brief high-level overview of our approach. The average over $t$-copies of a Haar-random state is proportional to the projector onto the symmetric subspace $\Sym{t}{[N]}$ (fact \ref{fact:haarsymproj} below), and the single-subset state contributions $\dyad{S}^{\otimes t}$ lie in the symmetric space by construction. Therefore it is enough to show closeness to the maximally mixed state from within the symmetric subspace itself.

The high-level motivation is to realize that uniform average over subset states is an object with a high degree of symmetry, and to take advantage of this symmetry in order to identify a large subspace within the ambient symmetric subspace on which the density matrix diagonalizes in a tractable fashion. Informally, we would like to find a `typical subspace' which satisfies the conditions:
\begin{enumerate}
    \item The typical subspace occupies most of the dimensions of the symmetric subspace $\Sym{t}{[N]}$.
    \item It admits a natural group action which is compatible with the symmetries of the subset state average.
    \item The structure of the subspace under the group action is particularly simple, being described by a single irreducible representation. This reduces the problem of computing the restricted trace distance to evaluating one eigenvalue of high multiplicity.
\end{enumerate}
We show it is possible to find such a subspace. We also show that these properties are enough to bound the total trace distance --- this is shown as an auxiliary result (lemma \ref{lemma:nearbymatrices} in section \ref{sec:helpfullemma}). Our irreducible subspace lies at the end of a two-step restriction, in which every stage maintains the vast majority of dimensions:
\begin{equation*}
\begin{array}{cclclc}
     &V_{[N-t, t]} &\subset& \Sunique &\subset& \Sym{t}{[N]} \\
     &{\scriptstyle \text{(largest } \bbS_N\text{-irrep block)}} & & {\scriptstyle\text{(unique type subspace)}} & & {\scriptstyle \text{(full symmetric subspace)}}\,. \\
     {\scriptstyle \text{dimension:}} & \binom{N}{t}-\binom{N}{t-1} &{\scriptstyle \approx_{O\left(t/N\right)}}& \binom{N}{t} &{\scriptstyle \approx_{O\left(t^2/N\right)}}& \binom{N+t-1}{t}
\end{array}
\end{equation*}

The first restriction is from the full symmetric subspace $\Sym{t}{[N]}$ to $\Sunique$, the subspace of unique types (defined in section \ref{sec:symspacebackground}). In the limit of $t \ll N$, only a fraction of $O(t^2/N)$ dimensions are lost under this restriction, due to the `birthday problem' asymptotics on the type basis of the symmetric subspace. When restricting the subset states themselves to this subspace, we incur a similar $O(t^2/m)$ combinatorial correction, which is the origin of the first trace distance term in theorem \ref{thrm:randomsubsetstates}. The second restriction is from the unique-type subspace $\Sunique$ to a subspace $V_{[N-t,t]}$, which arises as the largest irreducible representation block when decomposing $\Sunique$ under permutation action. This restriction is similarly typical, maintaining all but a subleading fraction $O(t/N)$ of dimensions. Overall, the largest irrep subspace $V_{[N-t,t]}$ occupies most of the total symmetric subspace, and will serve as our choice of typical subspace, on which we will be able to show proximity to the Haar average.

The subspace of unique types $\Sunique$ is indexed by subsets of size $t$ --- in this basis, the density matrix entries decay away from identity at the rate of $m/N$ as a function of the Hamming distance between subsets. To derive the associated spectral properties, we need to deal with the geometry of subset intersections, which we approach from the point of view of permutation action. The key technique is the explicit description of the subspace $\Sunique$ under the natural action of the symmetric group $\bbS_N$, i.e. permuting the computational basis. Formally, subsets of size $t$ can be seen as the cosets of $\bbS_N$ by the subset-preserving subgroup $\bbS_t \times \bbS_{N-t}$. This particular homogeneous space (the so-called {\em Johnson scheme}) has been extensively studied before. It is known that this space admits a particularly simple multiplicity-free irreducible decomposition with $t+1$ terms, owing to the fact that it obeys the Gelfand pair property. Conveniently, the uniform mixture of subset states is invariant under such permutations of the basis. As a consequence, the diagonal basis of the density matrix and the $\bbS_N$-irrep basis coincide on this subspace, and we can show that the largest irrep $V_{[N-t,t]}$ takes up the vast majority of the space. The theory of Gelfand pairs also provides the relevant spherical functions by which we analytically evaluate the most frequent eigenvalue associated with this eigenblock, which we show is only a relative correction of $O(tm/N)$ away from the Haar average. This is the origin of the second trace distance term in theorem \ref{thrm:randomsubsetstates}.

\section{Random subset states via non-Abelian harmonic analysis}
\label{sec:subsetstateproof}

In this section we explain the technical proof of theorem \ref{thrm:randomsubsetstates}, as well as a brief summary of the required background.  This section is divided into four parts. First, we start by establishing the core facts about the symmetric subspace and `birthday problem' typicality. Second, we outline the representation-theoretic background involving homogeneous spaces which respect the Gelfand pair property. In the third section, we apply these tools to prove theorem \ref{thrm:randomsubsetstates}, by diagonalizing (most of) the density matrix and computing its typical eigenvalue and showing it approaches the Haar average. In the fourth and final section, we outline a technical lemma which allows us to bound the total trace distance by only using information about a large enough subspace. Throughout, we will assume familiarity with the basic tools of representation theory.

\subsection{Background on the symmetric subspace}\label{sec:symspacebackground}

    Let us first review some well-known facts about the combinatorics of symmetric subspaces and establish the relevant notation for our task. For a broader review of similar techniques, see for example \cite{harrow2013church}.

    \begin{definition}
        Given a subset $S \subseteq [N]$, we define the $(S, t)$-symmetric subspace $\Sym{t}{S}$ as the symmetric subspace of $t$ copies of the space spanned by the sub-basis $\{\ket{j}\}_{j\in S}$. The full symmetric subspace $\Sym{t}{\C^N}$ is then denoted $\Sym{t}{[N]}$.
    \end{definition}
    
    \begin{fact}[Birthday asymptotics]\label{fact:birthdayasymptotics}
    The `birthday problem' can be summarized in the form of the small-$k$ relative correction to the ratio:
    \begin{align}
        \frac{n!}{(n-k)!} &= n^k\left(1+O\left(\frac{k^2}{n}\right)\right)\,.
    \end{align}
    Let us also record an immediate corollary:
    \begin{align}
        \binom{n\pm\l}{k} &= \frac{n^k}{k!}\left(1+O\left(\frac{k\l}{n}\right)+O\left(\frac{k^2}{n}\right)\right)\,.
    \end{align}
    \end{fact}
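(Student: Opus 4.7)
The plan is to prove both estimates by explicit product expansion followed by a first-order Taylor approximation, since each side is a product of $k$ factors. For the first identity I would write
\[\frac{n!}{(n-k)!} = \prod_{j=0}^{k-1}(n-j) = n^k\prod_{j=0}^{k-1}\left(1-\frac{j}{n}\right),\]
reducing the claim to $\prod_{j=0}^{k-1}(1-j/n) = 1+O(k^2/n)$ in the implicit regime $k^2/n = o(1)$. Taking logs and using $\log(1-x) = -x + O(x^2)$ for the small argument $x = j/n$, the sum collapses to
\[\sum_{j=0}^{k-1}\log\left(1-\tfrac{j}{n}\right) = -\frac{1}{n}\binom{k}{2} + O\left(\frac{k^3}{n^2}\right) = O\left(\frac{k^2}{n}\right),\]
and exponentiating via $e^y = 1+O(y)$ for small $y$ produces the desired relative correction $1+O(k^2/n)$.

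For the corollary, I apply the first identity with $n$ replaced by $n\pm\ell$ (and divide by $k!$), using $n\pm\ell = n(1+O(\ell/n))$ in the regime $\ell/n = o(1)$ to simplify the error denominator, so
\[\binom{n\pm\ell}{k} = \frac{(n\pm\ell)^k}{k!}\left(1+O\left(\frac{k^2}{n}\right)\right).\]
Then I factor $(n\pm\ell)^k = n^k(1\pm\ell/n)^k$ and repeat the log/Taylor trick: $\log(1\pm\ell/n) = \pm\ell/n + O(\ell^2/n^2)$ gives an exponent of $\pm k\ell/n + O(k\ell^2/n^2) = O(k\ell/n)$, so $(1\pm\ell/n)^k = 1+O(k\ell/n)$ provided $k\ell/n = o(1)$. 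Multiplying the two multiplicative corrections (and absorbing the subleading cross-term $O(k\ell/n)\cdot O(k^2/n)$) yields the stated $1+O(k\ell/n)+O(k^2/n)$.

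The main obstacle is entirely bookkeeping: one must verify that the implicit smallness conditions $k^2/n = o(1)$ and $k\ell/n = o(1)$ are in force at each invocation, so that exponentials may legitimately be linearized and the two correction factors combined multiplicatively. In the downstream applications to Theorem~\ref{thrm:randomsubsetstates} this is automatic from the stated parameter regime ($t$ polynomial in $n$, with $m$ superpolynomial and $N-m$ superpolynomial, and $\ell$ of order $t$), so the estimates plug in directly wherever needed.
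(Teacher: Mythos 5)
The paper records this as a standard Fact and gives no proof, so there is nothing to compare your derivation against; on its own merits, your argument is correct. Writing $n!/(n-k)! = n^k\prod_{j=0}^{k-1}(1-j/n)$, taking logarithms with the uniform estimate $\log(1-x)=-x+O(x^2)$ for $x=j/n=o(1)$, and exponentiating recovers the $1+O(k^2/n)$ correction; and applying this with $n\mapsto n\pm\ell$ plus the same log-linearization on $(1\pm\ell/n)^k$ gives the corollary, with the cross-term $O(k^3\ell/n^2)$ correctly absorbed. You could shave a few lines from the first estimate by squeezing $1\geq\prod_{j=0}^{k-1}(1-j/n)\geq 1-\sum_{j=0}^{k-1}j/n=1-\binom{k}{2}/n$ (Weierstrass/Bernoulli), which avoids logs entirely, but this is a matter of taste, and your version has the advantage of transparently extending to the signed case in the corollary. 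Your final remark on the implicit regimes is also right: in the paper's applications one has $k,\ell=O(t)=\poly(n)$ against denominators $m$ and $N$ that are superpolynomial, so $k^2/n$ and $k\ell/n$ are $o(1)$ at every invocation.
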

    
    \begin{fact}[Haar-averaging projects onto the symmetric subspace]\label{fact:haarsymproj}
        The average of $t$ copies of a Haar-random state is the maximally mixed state over the symmetric subspace $\Sym{t}{[N]}$:
        \begin{equation}
            \E_{\phi \sim \Haar{[N]}} \dyad{\phi}^{\otimes t} = \frac{1}{\binom{N+t-1}{t}}\Pi_{\Sym{t}{[N]}}\,.
        \end{equation}
    \end{fact}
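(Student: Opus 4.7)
The plan is to prove the identity by a symmetry-plus-normalization argument. Let $M \deff \E_{\phi \sim \Haar{[N]}} \dyad{\phi}^{\otimes t}$. First I would observe that each $\ket{\phi}^{\otimes t}$ lies in $\Sym{t}{[N]}$, so $M = \Pi_{\Sym{t}{[N]}}\, M\, \Pi_{\Sym{t}{[N]}}$ and $M$ is supported on the symmetric subspace. Second, by the left/right invariance of the Haar measure, for any fixed $U \in U(N)$, the substitution $\ket{\phi} \mapsto U\ket{\phi}$ leaves the average unchanged, which gives $U^{\otimes t} M (U^\dagger)^{\otimes t} = M$ for all $U$. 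Hence $M$ is an intertwiner for the diagonal $U(N)$-action on $(\C^N)^{\otimes t}$ restricted to $\Sym{t}{[N]}$.

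The key representation-theoretic input I would invoke is the standard fact that $\Sym{t}{[N]}$ carries an irreducible representation of $U(N)$ --- it is the Weyl module corresponding to the one-row Young diagram of length $t$. By Schur's lemma, any operator commuting with this irreducible action must be a scalar multiple of the identity on $\Sym{t}{[N]}$. Combined with the support statement from the previous step, this forces
\begin{equation*}
M = c \cdot \Pi_{\Sym{t}{[N]}}
\end{equation*}
for some constant $c$.

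The constant $c$ I would pin down by taking the trace. Since $\Tr\dyad{\phi}^{\otimes t} = 1$ for every unit vector $\phi$, the average satisfies $\Tr M = 1$. On the other hand, $\Tr \Pi_{\Sym{t}{[N]}} = \dim \Sym{t}{[N]} = \binom{N+t-1}{t}$, which is the standard count of multisets of size $t$ drawn from $[N]$ (the type basis of the symmetric subspace). Together these give $c = 1/\binom{N+t-1}{t}$, completing the proof.

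The only genuine content here is the irreducibility of the symmetric subspace under $U(N)$, which is where all the structural information lives; every other step is either an invariance check or a one-line normalization. If one wished to avoid citing irreducibility, an alternative is to expand $\ket\phi^{\otimes t}$ in the type basis and evaluate the Haar moment integrals directly (using, for instance, the Weingarten formula for the $U(N)$ moments of monomials in the matrix entries), but this is substantially more computational and ultimately recovers the same constant, so I would prefer the Schur's lemma route.
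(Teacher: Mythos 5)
The paper does not prove Fact~\ref{fact:haarsymproj} --- it is stated as a background fact and the reader is referred to \cite{harrow2013church} for a review. Your proof is correct and is the standard argument one would find there: $M$ is supported on $\Sym{t}{[N]}$, Haar invariance gives $U^{\otimes t} M (U^\dagger)^{\otimes t} = M$ for all $U \in \mathrm{U}(N)$, the symmetric subspace is an irreducible $\mathrm{U}(N)$-module (one-row Weyl module), Schur's lemma forces $M = c\,\Pi_{\Sym{t}{[N]}}$, and $\Tr M = 1$ together with $\dim \Sym{t}{[N]} = \binom{N+t-1}{t}$ fixes $c$. Nothing is missing; the only nontrivial input, as you correctly flag, is the irreducibility of $\Sym{t}{\C^N}$ under the diagonal $\mathrm{U}(N)$-action.
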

    \begin{fact}[The type basis for the symmetric subspace]\label{fact:typebasis}
        The canonical orthonormal type basis for $\Sym{t}{S}$ is composed of the single-type vectors $\ket{\theta}$, for types identified by size-$t$ multisets $\theta \in \MSet(S, t)$. Specifically, a type/multiset $\theta$ is an unordered collection of $t$ elements from $S$, with repetitions allowed. Therefore, if $S=\{x_1, \dots, x_m\}$, then a type $\theta \in \MSet(S,t)$ can be equivalently defined as a sequence of $m$ non-negative integers $(\theta_{x_1}, \theta_{x_2}, \dots, \theta_{x_m})$ with $\theta_{x_1}+\dots+\theta_{x_m}=t$, representing the number of occurences of each $x_k$. For such a $\theta$ type, the corresponding basis vector $\ket{\theta}$ is proportional to all $t$-permutations of basis vectors described by this type, namely:
    \begin{align}
    \ket{\theta}&=\sqrt\frac{\abs{\theta}}{t!} \sum_{\sigma \in \bbS_t}\sigma\ket{\underbrace{x_1,\dots,x_1}_{\theta_{x_1}\text{ times}},\;\underbrace{x_2,\dots,x_2}_{\theta_{x_2}\text{ times}},\dots\dots,\underbrace{x_m,\dots,x_m}_{\theta_{x_m}\text{ times}}} & \text{ where }\abs{\theta}\equiv \theta_{x_1}!\dots \theta_{x_m}!\,.
    \end{align}
    Here, the action is the usual site permutation, namely $\sigma\ket{x_1,\dots, x_t}=\ket{x_{\sigma^{-1}(1)},\dots, x_{\sigma^{-1}(t)}}$. There are $\dim\Sym{t}{S}=\abs{\MSet(S,t)}=\binom{\abs{S}+t-1}{t}$ distinct types.
    \end{fact}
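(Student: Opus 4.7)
The plan is to establish three claims in sequence: (i) each $\ket{\theta}$ lies in $\Sym{t}{S}$; (ii) the collection $\{\ket{\theta}\}_{\theta \in \MSet(S,t)}$ is orthonormal; and (iii) it spans $\Sym{t}{S}$, with the multiset count $\binom{\abs{S}+t-1}{t}$ fixing the dimension. The unifying tool is the orbit--stabilizer analysis of the $\bbS_t$-action on ordered tuples from $S$, together with the standard symmetrizer $\Pi_\mathrm{sym} = \tfrac{1}{t!}\sum_{\sigma\in\bbS_t}\sigma$, whose image on $(\C^S)^{\otimes t}$ is by definition $\Sym{t}{S}$.

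For (i), I would observe that, up to an overall scalar, $\ket{\theta}$ is exactly $\Pi_\mathrm{sym}\ket{y_\theta}$, where $\ket{y_\theta}=\ket{\underbrace{x_1,\dots,x_1}_{\theta_{x_1}},\dots,\underbrace{x_m,\dots,x_m}_{\theta_{x_m}}}$ is a fixed ordered representative of the multiset $\theta$. Membership in the symmetric subspace is then immediate from the definition of $\Pi_\mathrm{sym}$.

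The substantive step is (ii). The stabilizer of $\ket{y_\theta}$ under the positional action of $\bbS_t$ is the Young subgroup $\bbS_{\theta_{x_1}}\times\cdots\times\bbS_{\theta_{x_m}}$, of order $\abs{\theta}=\theta_{x_1}!\cdots\theta_{x_m}!$. By orbit--stabilizer, the orbit of $\ket{y_\theta}$ consists of $t!/\abs{\theta}$ distinct computational-basis vectors, each arising $\abs{\theta}$ times in the sum $\sum_\sigma \sigma\ket{y_\theta}$; squaring gives unnormalized norm-squared $t!\cdot\abs{\theta}$. Matching this against the prefactor in the definition verifies unit norm. Orthogonality between $\theta\neq\theta'$ is automatic, because the orbits in question are disjoint sets of computational-basis vectors, so $\ket{\theta}$ and $\ket{\theta'}$ have disjoint supports.

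For (iii), every computational-basis vector $\ket{z_1,\dots,z_t}\in(\C^S)^{\otimes t}$ satisfies $\Pi_\mathrm{sym}\ket{z_1,\dots,z_t}\propto \ket{\theta(\mathbf{z})}$, where $\theta(\mathbf{z})$ is the multiset of entries of $\mathbf{z}$; since $\Pi_\mathrm{sym}$ surjects onto $\Sym{t}{S}$, the family $\{\ket{\theta}\}_\theta$ spans, and combined with orthonormality from (ii) it is an orthonormal basis. The dimension count $\abs{\MSet(S,t)}=\binom{\abs{S}+t-1}{t}$ is the standard stars-and-bars enumeration of multisets of size $t$ drawn from $\abs{S}$ symbols. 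The only real pitfall is carefully tracking the stabilizer-order and orbit-size factors when computing the normalization — conceptually straightforward but easy to miscount on a first pass.
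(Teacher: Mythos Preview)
Your proposal is correct and is the standard argument. The paper itself does not prove this fact: it is presented as well-known background with a pointer to \cite{harrow2013church}, so there is no paper proof to compare against. Your orbit--stabilizer computation of the normalization and the stars-and-bars dimension count are exactly the expected verifications.
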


    \begin{fact}[Most of the symmetric subspace is spanned by unique types]\label{fact:birthdaytypes}
        If all frequencies $\alpha_x$ in a type $\alpha \in \MSet(S,t)$ are either $0$ or $1$ (i.e. no duplicates), then we refer to $\alpha$ as a unique type. This corresponds to restricting the multisets to conventional sets, such that each unique type corresponds to a $t$-subset of $S$, which we will simply denote  by $\alpha \in \binom{S}{t}$. Let $\Sunique=\mathrm{span}\{\ket{\alpha}\}_{\alpha \in \binom{[N]}{t}}$ be the subspace of unique types inside the full symmetric subspace $\Sym{t}{[N]}$. The birthday asymptotics tell us that when $t$ is much smaller than $N$, most of the symmetric subspace is occupied by the unique subspace $\Sunique$, up to a small relative fraction of $O(t^2/N)$ dimensions, since:
        \begin{align}
            \underbrace{\dim \Sunique}_{=\binom{N}{t}} = \underbrace{\dim \Sym{t}{[N]}}_{=\binom{N+t-1}{t}}\left(1 + O\left(\frac{t^2}{N}\right)\right)\,.
        \end{align}
    \end{fact}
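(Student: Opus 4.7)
The plan is to reduce the statement to a direct substitution into the birthday asymptotics of Fact~\ref{fact:birthdayasymptotics}. First, I would invoke Fact~\ref{fact:typebasis} to fix the two dimensions: the type basis of $\Sym{t}{[N]}$ is indexed by size-$t$ multisets of $[N]$, giving $\dim \Sym{t}{[N]} = \binom{N+t-1}{t}$, while $\Sunique$ is by construction spanned by those type vectors whose multiset is repetition-free, i.e. an ordinary $t$-subset of $[N]$, giving $\dim \Sunique = \binom{N}{t}$. Because the type basis is orthonormal, this count is an exact dimension, not just a lower bound, and $\Sunique$ is automatically a subspace of $\Sym{t}{[N]}$ as the span of a distinguished subset of the basis.

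Next, I would apply the corollary $\binom{n\pm\l}{k} = \frac{n^k}{k!}\bigl(1+O(k\l/n)+O(k^2/n)\bigr)$ from Fact~\ref{fact:birthdayasymptotics} to both binomial coefficients. The substitution $n=N,\,k=t,\,\l=0$ yields $\binom{N}{t} = \frac{N^t}{t!}\bigl(1+O(t^2/N)\bigr)$, while $n=N,\,k=t,\,\l=t-1$ with the $+$ sign yields $\binom{N+t-1}{t} = \frac{N^t}{t!}\bigl(1+O(t^2/N)\bigr)$, after absorbing $t(t-1)/N$ into the $O(t^2/N)$ bucket. Forming the ratio cancels the common factor $N^t/t!$ and gives $\binom{N}{t}\big/\binom{N+t-1}{t} = 1+O(t^2/N)$, which after rearrangement is the desired identity.

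There is no real obstacle here --- the whole fact is a bookkeeping exercise on top of already-assembled ingredients. The only point worth double-checking is that the two multiplicative error terms combine cleanly across numerator and denominator, which is valid precisely in the regime $t^2/N \to 0$ in which the asymptotic statement is meaningful; outside that regime the claim degenerates trivially.
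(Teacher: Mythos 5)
Your proposal is correct and follows exactly the route the paper intends: the dimensions $\binom{N}{t}$ and $\binom{N+t-1}{t}$ come from Fact~\ref{fact:typebasis} (counting subsets versus multisets), and the $1+O(t^2/N)$ ratio is a direct substitution into the binomial corollary of Fact~\ref{fact:birthdayasymptotics} with $\ell=0$ and $\ell=t-1$. The paper offers no separate proof of Fact~\ref{fact:birthdaytypes} precisely because it is this bookkeeping exercise, and your handling of the error-term combination (valid in the regime $t^2/N\to 0$) is the right caveat to note.
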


\begin{remark}
    A word of warning before we proceed: the unique-type states $\ket{\alpha}$ which form an orthonormal basis for $\Sunique$ are {\em not} subset states of the form \eqref{eq:subsetstatedef}! This is despite the fact that they are defined by subsets of $[N]$. For subset $S\in\binom{[N]}{m}$, the subset state $\ket{S}$ lies in the original Hilbert space $\C^N$, while for subset $\alpha \in \binom{[N]}{t}$, the unique-type state $\ket{\alpha}$ lies in the symmetric subspace within the $t$-copied space $(\C^N)^{\otimes{t}}$. We will maintain the notation of uppercase-Latin letters for subset states (e.g. $\ket{S}$), and lowercase-Greek letters for type states (e.g. $\ket{\alpha}$), in order to emphasize this difference.
\end{remark}

\subsection{Background on Gelfand pairs and homogeneous spaces}\label{sec:RTbackground}

    Here, we outline the basic facts about group theory, homogeneous spaces, finite Gelfand pairs, and the specific case of $\bbS_N \,/\, \bbS_t \times \bbS_{N-t}$ relevant to our problem. Most of these facts are condensed from the relevant literature, in particular we point the interested reader to section 3F in the book by Diaconis \cite{diaconis1988group}, and to chapter 6 in the book by Ceccherini-Silberstein, Scarabotti, and Tolli \cite{ceccherini2008}.

        {\bf Preliminaries. } Let $G$ be a finite group, $K \leq G$ a subgroup (called the isometry subgroup), and $X= G/K$ the homogeneous space of right-$K$-cosets of the form $Kg$. The associated action of $G$ on $X$ is $g:\,x\mapsto xg^{-1}$. Denote by $\Irr{G}$ the set of all irreducible representations (irreps) of $G$, such that $\rho_\lambda:G \to \text{U}(d_\lambda)$ is a unitary $G$-irrep in dimension $d_\lambda$, for each $\lambda \in \Irr{G}$. Furthermore, denote $\Pi_K^\lambda\equiv \frac{1}{\abs{K}}\sum_{k\in K}\rho_\lambda(k)$ the projector onto the subspace fixed by $K$ inside the $\lambda$ irrep.

        To ease both notation and intuition, we will not make a notation distinction between the elements of the homogeneous space $X$ and a choice of $K$-coset representatives seen as elements of $G$. We will denote by $\Irr{L(X)}$ the set of irreps of $G$ which have nonzero multiplicity in the decomposition of $L(X)=\C^X$ (the set of complex-valued functions defined on $X$, alternatively denoted $\mathbb C[X]$, also called the permutation representation).

    The notion of {\em Gelfand pair} is a special relationship between a group $G$ and its subgroup $K$ with particularly simple representation-theoretic properties. In our application, we will see that the space of unique types $\Sunique$ can be described by a canonical choice of Gelfand pair.

    \begin{definition}[Finite Gelfand pairs]
        For finite group $G$ and subgroup $K < G$, the following are equivalent definitions of $(G, K)$ forming a Gelfand pair:
        \begin{enumerate}
            \item[(1)] The algebra of bi-$K$-invariant functions $f(k_1gk_2)=f(g)$ under $G$-convolution is commutative (the textbook definition).
            \item[(2)] For $X=G/K$, the decomposition of $L(X)$ into $G$-irreps is multiplicity-free.
            \item[(3)] In every irreducible $G$-representation $\lambda$ present in the decomposition of $L(X)$, the subspace fixed by $K$ is one-dimensional. In other words, the projector $\Pi_K^\lambda$ is of rank one, and we denote $\Pi_K^\lambda=\dyad{k^\lambda}$.
        \end{enumerate}
    \end{definition}

    Homogeneous spaces over Gelfand pairs are common in applications involving group symmetry across statistics and combinatorics, for example in studying the convergence properties of random walks over groups \cite{diaconis1988group}. In quantum information, Gelfand pairs over the unitary group $\text{U}(d)$ have recently been used to recursively construct exact unitary $t$-designs for arbitary $t$ and $d$ \cite{bannai2019explicit, nakata2021quantum}.
    
    It is well-known that the conventional Abelian Fourier transform over $\Z_N$ diagonalizes a circulant matrix --- i.e. a matrix for which the $(i,j)$ entry only depends on the difference $i-j$. We will use a generalization of this notion to the non-Abelian setting, which on a homogeneous space is easily shown to be identical to group invariance:

    \begin{fact}[Group-invariant matrices are group-circulant]
        For $X=G/K$, a matrix $\calM \in \C^{X \times X}$ commutes with the group action of $G$ if and only if $\calM$ is $G$-circulant, i.e. there exists a function $\nu : K\backslash G/K \to \C$ such that:
        \begin{equation}
            \calM_{xy} = \nu(x y^{-1})\,.
        \end{equation}
        In other words, the entries depend only on the `distance from diagonal' in terms of the group operation on $G$. Note that a well-defined circulant function $\nu$ is constant on the double cosets of $K$, i.e. $\nu(\gamma)=\nu(k_1\gamma k_2),\;\;\forall \gamma\in G,\;k_{1,2}\in K$.
    \end{fact}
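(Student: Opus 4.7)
The plan is to prove both directions of the equivalence by direct bookkeeping, leveraging the transitivity of the $G$-action on $X = G/K$. The underlying observation is that although the product $xy^{-1}$ is not literally well-defined as an element of $G$ when $x = Kg_x$ and $y = Kg_y$ are cosets, the \emph{double} coset $Kg_x g_y^{-1} K$ is unambiguous: replacing $g_x \mapsto k_1 g_x$ and $g_y \mapsto k_2 g_y$ sends $g_x g_y^{-1}$ to $k_1 g_x g_y^{-1} k_2^{-1}$, which lies in the same double coset. This is precisely why $\nu$ is naturally a function on $K\backslash G/K$, and all assertions involving $xy^{-1}$ should be read as statements about this double coset.

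For the easier direction ($\calM$ circulant implies $\calM$ is $G$-invariant), I would pick any representatives and compute $(g\cdot x)(g\cdot y)^{-1} = (g_x g^{-1})(g_y g^{-1})^{-1} = g_x g_y^{-1}$, so the double coset is preserved by the $G$-action and hence $\calM_{g\cdot x,\, g\cdot y} = \nu(g_x g_y^{-1}) = \calM_{x,y}$. For the converse, I would fix the basepoint $x_0 = K \in X$ and define the candidate $\nu: G \to \C$ by $\nu(g) \equiv \calM_{Kg,\, x_0}$. Left $K$-invariance $\nu(kg)=\nu(g)$ is automatic from $Kkg = Kg$; right $K$-invariance $\nu(gk) = \nu(g)$ is the one step that genuinely uses the commutation hypothesis: applying the $G$-action of $k$ sends $x_0 = K \mapsto K k^{-1} = x_0$ and $Kgk \mapsto Kgk \cdot k^{-1} = Kg$, yielding $\calM_{Kgk,\, K} = \calM_{Kg,\, K}$ and hence $\nu(gk) = \nu(g)$. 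Once $\nu$ has descended to a well-defined function on $K\backslash G/K$, the circulant identity follows by applying the element $g_y$ to both indices in $\calM_{x,y}$: this sends $x = Kg_x \mapsto Kg_x g_y^{-1}$ and $y = Kg_y \mapsto x_0$, giving $\calM_{x,y} = \calM_{Kg_xg_y^{-1},\, x_0} = \nu(g_xg_y^{-1})$.

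I do not anticipate any genuine obstacle; this is a routine unwinding of definitions. The only subtleties worth care are (i) consistent tracking of the side on which $G$ acts, since the convention $g:x\mapsto xg^{-1}$ is exactly what makes the cancellation $(g_xg^{-1})(g_yg^{-1})^{-1} = g_xg_y^{-1}$ produce a quantity independent of $g$, and (ii) noticing that full $G$-invariance of $\calM$ (rather than mere transitivity of the action) is needed precisely at the step that establishes right $K$-invariance of $\nu$, which in turn is what upgrades $\nu$ from a function on $G$ to a well-defined function on the double-coset space.
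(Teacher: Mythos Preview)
Your proposal is correct and follows essentially the same approach as the paper: both directions hinge on the cancellation $(xg)(yg)^{-1}=xy^{-1}$ for the ``if'' direction, and on translating the second index to the basepoint $x_0=K$ via the $G$-action for the ``only if'' direction, then reading off $\nu$ from the resulting column. Your write-up is in fact slightly more thorough than the paper's, which leaves the bi-$K$-invariance of $\nu$ (and hence its well-definedness on $K\backslash G/K$) implicit; your explicit check of left $K$-invariance (trivially from $Kkg=Kg$) and right $K$-invariance (via the action of $k\in K$ fixing $x_0$) is the correct way to fill that gap.
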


    \begin{proof}
        Under group action, the matrix transforms as $\left(g\calM g^{-1}\right)_{x,y}=\calM_{xg, yg}$. The `if' direction is immediate since $\nu((xg)(yg)^{-1})=\nu(xy^{-1})$. In the `only if' direction, invariance of the matrix under group action means that $\calM_{xg, yg}=\calM_{x,y}$ for all $x,y\in X$ and $g\in G$. Picking $g$ from the left coset $y^{-1}K$ means that $\calM_{x,y}=\calM_{xy^{-1},\text{id}}$, for all $x,y\in X$. Then the circulant condition is satisfied, with circulant function $\nu(x)=\calM_{x,\text{id}}$.
    \end{proof}
    
    In analogy to the Abelian case, the non-Abelian Fourier transform over the group $G$, which changes the basis into the irrep-block diagonal basis, achieves a partial diagonalization of a commutant matrix. When imposing the additional Gelfand pair property, we will see that the matrices are completely diagonalized by this basis change:

    \begin{fact}[Diagonalization of circulant matrices]\label{lemma:diagonalization}
        For Gelfand pair $(G, K)$ and $X=G/K$, there is a unitary matrix $W \in \C^{X\times X}$ (the $G$-Fourier transform), such that any matrix $\calM \in \C^{X\times X}$ which commutes with the $G$-action on $X$ is diagonalized by $W$:
        \begin{equation}
            W\calM W^\dagger = \bigoplus_{\lambda \in \Irr{L(X)}} \tilde\mu_\lambda \cdot I_{d_\lambda}\,.
        \end{equation}
        If the matrix circulant function is $\calM_{xy}=\nu(xy^{-1})$, the eigenvalues are given by:
        \begin{align}\label{eq:eigenvalues}
            \tilde{\mu}_\lambda &= \sum_{x\in X} \nu(x) \Phi^{(G,K)}_{\lambda}(x)\,,
        \end{align}
        where $\Phi^{(G,K)}_\lambda : X \to \C$ is known as the {\em spherical function}:
    \begin{equation}
        \Phi^{(G,K)}_{\lambda}(x) \equiv  \tr[\rho_\lambda(x)\Pi_K^\lambda]= \mel{k^\lambda}{\rho_\lambda(x)}{k^\lambda}\,.
    \end{equation}
    \end{fact}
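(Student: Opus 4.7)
The plan is to combine the multiplicity-free decomposition of $L(X)$ guaranteed by the Gelfand pair property with Schur's lemma. Since $\calM$ commutes with the $G$-action, it must restrict to a scalar on every irreducible component of $L(X)$ and vanish between inequivalent ones. First, by definition (2) of a Gelfand pair, I would write $L(X) = \bigoplus_{\lambda \in \Irr{L(X)}} V_\lambda$ as a multiplicity-free sum of pairwise inequivalent irreducible $G$-representations; let $W$ be the unitary that implements this change of basis (the $G$-Fourier transform). Since $W \calM W^\dagger$ is an intertwiner of the $G$-action on $L(X)$ with itself, Schur's lemma immediately yields the block form $W\calM W^\dagger = \bigoplus_\lambda \tilde\mu_\lambda I_{d_\lambda}$ for some scalars $\tilde\mu_\lambda$; the off-diagonal blocks vanish because they would intertwine inequivalent irreps, and the diagonal blocks are scalar because they intertwine an irrep with itself.

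To identify the eigenvalues $\tilde\mu_\lambda$, I would use the $K$-fixed vector $\ket{k^\lambda}$ supplied by definition (3): since $\calM$ acts as $\tilde\mu_\lambda I$ on $V_\lambda$ and $\ket{k^\lambda}$ is a unit vector in $V_\lambda$, we have $\tilde\mu_\lambda = \mel{k^\lambda}{\calM}{k^\lambda}$. The key reformulation is to express $\calM$ directly in terms of the permutation representation $\pi : G \to \mathrm{End}(L(X))$ of $G$ on $L(X)$, namely
\[
\calM \;=\; \sum_{z \in X} \nu(z)\,\pi(z),
\]
where $z$ is identified with any coset representative. This is verified by acting on a basis vector: $\sum_z \nu(z)\pi(z)\ket{y} = \sum_z \nu(z)\ket{zy} = \sum_x \nu(xy^{-1})\ket{x}$, matching the group-circulant definition of $\calM$. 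Plugging this into the matrix element and using $\pi|_{V_\lambda} = \rho_\lambda$ gives
\[
\tilde\mu_\lambda \;=\; \sum_{z \in X} \nu(z)\,\mel{k^\lambda}{\rho_\lambda(z)}{k^\lambda} \;=\; \sum_{z \in X} \nu(z)\,\Phi_\lambda^{(G,K)}(z),
\]
which is the claimed formula.

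The main technical subtlety is the bookkeeping of the coset geometry. The function $\nu$ is naturally defined on the double cosets $K \backslash G / K$, while both the operator identity $\calM = \sum_{z\in X}\nu(z)\pi(z)$ and the inner product $\mel{k^\lambda}{\pi(z)}{k^\lambda}$ require lifting $\nu$ to a function on $X = G/K$ via a choice of coset representatives. Checking that each lift is independent of the chosen representatives --- which it is, thanks to the bi-$K$-invariance of $\nu$ together with the $K$-invariance of $\ket{k^\lambda}$ --- is the place where the argument requires some care. Beyond that the proof is essentially structural: it packages Schur's lemma and the defining property of a Gelfand pair into an explicit diagonalization formula, with the spherical functions $\Phi_\lambda^{(G,K)}$ supplying the eigenvalue data.
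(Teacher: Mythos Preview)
Your diagonalization via Schur's lemma and the multiplicity-free decomposition is exactly the paper's first step. For the eigenvalues the paper computes $\tilde\mu_\lambda=\frac{1}{d_\lambda}\Tr[P_\lambda^R\,\calM]$ with the isotypic projector $P_\lambda^R=\frac{d_\lambda}{|G|}\sum_g\chi_\lambda(g)R(g)$, reduces this to the full-group sum $\frac{1}{|K|}\sum_{g\in G}\chi_\lambda(g)\nu(g)$, and only then groups into $K$-cosets to produce the spherical-function expression. You instead take $\tilde\mu_\lambda=\langle k^\lambda|\calM|k^\lambda\rangle$ (which is fine) and try to pass through the operator identity $\calM=\sum_{z\in X}\nu(z)\pi(z)$.

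That operator identity is the gap: it is false in general. Your verification treats the substitution $x=zy$ as a bijection on $X$, but multiplication by a fixed $y$ is a bijection on $G$, not on the coset space $G/K$ (indeed $zK\mapsto zyK$ is not even well-defined when $K$ is non-normal). A direct check on $G=\bbS_3$, $K=\langle(12)\rangle$, representatives $\{e,(13),(23)\}$, already shows that $\sum_{z}\nu(z)\pi(z)$ and $\calM$ differ off-diagonally. Your final formula is nonetheless correct, because the matrix elements $\langle k^\lambda|\,\cdot\,|k^\lambda\rangle$ of the two operators coincide: $K$-invariance of $|k^\lambda\rangle$ on both sides promotes the coset sum to $\frac{1}{|K|}\sum_{g\in G}\nu(g)\Phi_\lambda(g)$, and showing that this equals $\tilde\mu_\lambda$ is essentially the paper's character computation. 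The fix is therefore to drop the operator equality on $L(X)$ and work with a full-$G$ sum from the start, descending to $X$ only after the $K$-averaging --- which is exactly how the paper's proof is organized.
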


    \begin{proof}
        This is a standard application of Schur's lemma. Let the space $L(X)$ have a generic $G$-irrep decomposition into irreducible $G$-modules with nonzero multiplicities $m_\lambda$. $W$ is the unitary basis change on $L(X)$ associated with this view. In this basis, the action of $G$ on $L(X)$ has the matrix structure:
        \begin{equation}
            WR(g)W^\dagger = \bigoplus_{\lambda \in \Irr{L(X)}} \rho_\lambda(g) \otimes I_{m_\lambda}\,.
        \end{equation}
        By Schur orthogonality, a matrix $\calM$ acting on $L(X)$ which commutes with the $G$-action is irrep-block diagonal:
        \begin{equation}
            W\calM W^\dagger = \bigoplus_{\lambda \in \Irr{L(X)}} I_{d_\lambda} \otimes Q_\lambda\,,
        \end{equation}
        for some matrices $Q_\lambda$ of size $m_\lambda \times m_\lambda$.

        For a Gelfand pair, the corresponding irrep-decomposition is guaranteed to be multiplicity-free, i.e. $m_\lambda=1$, which means that in this basis the $Q_\lambda$ matrices are simply scalars, i.e. the eigenvalues $\tilde\mu_\lambda$.
        
        Recall that for the $G$-action $R$ acting on $L(X)$, the projector onto the subspace corresponding to irrep $\lambda \in \Irr{G}$ is given in terms of the irreducible $G$-characters $\chi_\lambda$ by $P_\lambda^R \equiv \frac{d_\lambda}{\abs{G}}\sum_{g\in G}\chi_\lambda(g)R(g)$. This allows us to express the eigenvalues in terms of the trace over the $\lambda$-block:
        \begin{align}
            \tilde\mu_\lambda &= \frac{1}{d_\lambda} \Tr[P_\lambda^R\,\calM]&\\
            &= \frac{1}{\abs{G}} \sum_{g\in G} \chi_\lambda(g) \sum_{x\in X}\calM_{xg, x}&\\
            &= \frac{1}{\abs{K}} \sum_{g \in G} \chi_\lambda(g) \nu(g)& \text{(for circulant matrix function $\nu$)}\\
            &= \sum_{x \in X} \nu(x)\left(\frac{1}{\abs{K}}\sum_{k\in K}\chi_\lambda(kx)\right).&\text{(circulant function $\nu$ is $K$-invariant)}\\
            &= \sum_{x\in X} \nu(x) \tr[\rho_\lambda(x)\Pi_K^\lambda]\,.
        \end{align}
    \end{proof}
    \vspace{-20pt}
    \begin{remark}
        A standard representation-theoretic fact is that the irreducible composition of $L(G/K)$ is the same as that of $\mathrm{Ind}_{K}^{G}1$, i.e. inducing the trivial representation on $K$ up to $G$.
    \end{remark}

    In our application, the density matrix arising from averaging over subset states will exhibit precisely this kind of circulant property, thus admitting diagonalization along the irrep blocks. A key benefit of the theory of Gelfand pairs is being able to work in the algebra spanned by the spherical functions directly, which are generally much simpler objects than irreducible characters, and can often be obtained analytically. This is precisely the case in our application: below, we detail the relevant facts which apply to our particular setting of interest on the symmetric group, by collecting a standard series of results from \cite{diaconis1988group, ceccherini2008}.

    \begin{fact}[The Johnson scheme]\label{fact:johnsonfacts}
        Consider the case $G=\bbS_N$, with the isometry subgroup $K=\bbS_t \times \bbS_{N-t}$. Then we have that:
        \begin{enumerate}
            \item The cosets $X=\bbS_N\,/\,\bbS_t \times \bbS_{N-t}$ correspond to the set $\binom{[N]}{t}$ of subsets of $[N]$ of size $t$. For a $g \in \bbS_N$, the standard mapping is to the subset $\alpha=g([t])$, the image of $[t]$ under $g$.
            \item $(\bbS_N,\;\bbS_t \times \bbS_{N-t})$ is a Gelfand pair.
            \item Recall that the irreps of $\bbS_N$ are indexed by Young diagrams of size $N$, which correspond to integer partitions of $N$. There are $t+1$ different $\bbS_N$-irreps which appear in the decomposition of $L(X)$, specifically:
            \begin{align}
                L(X) &\simeq \bigoplus_{q=0}^t V_{[N-q, q]}\,.
            \end{align}
            In other words, the relevant irreps are indexed by the Young diagrams with one or two rows $[N], [N-1,1], \dots, [N-t,t]$.
            \item There are $t+1$ orbits of $X$ under right-action by $K$ (corresponding to the double-cosets $K\backslash G/K$), indexed by the subset distance $p\in\{0,1,\dots,t\}$. The orbit of $X$ associated with subset distance $p$ has $\binom{t}{p}\binom{N-t}{p}$ elements, which is the number of $t$-subsets at Hamming distance $p$ from a fixed subset.
            \item For the irrep $\lambda = [N-q, q]$, the spherical function is known to have the analytical form, as a function of the distance $p\in\{0,\dots,t\}$:
            \begin{equation}\label{eq:sphericalfunction}
                \Phi_{[N-q,q]}(p) = \sum_{k=0}^q (-1)^k \frac{\binom{q}{k}\binom{p}{k}\binom{N-q+1}{k}}{\binom{t}{k}\binom{N-t}{k}}\,\quad\quad\text{ for }\; 0\leq p \leq t\,.
            \end{equation}
        \end{enumerate}
    \end{fact}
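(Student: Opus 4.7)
The plan is to verify each of the five sub-claims in turn, leaning on classical symmetric-group representation theory. Part (1) follows immediately from orbit–stabilizer: $\bbS_N$ acts transitively on $\binom{[N]}{t}$, and the stabilizer of the reference subset $[t]=\{1,\dots,t\}$ consists of permutations preserving both $[t]$ and its complement, which is exactly $\bbS_t \times \bbS_{N-t}$. Part (4) is also elementary: two $t$-subsets $\alpha,\beta$ whose symmetric difference has $2p$ elements are at "Johnson distance" $p$, and fixing $\alpha=[t]$ one counts subsets $\beta$ at distance $p$ by choosing which $p$ elements to drop from $[t]$ and which $p$ to pick up from $[N]\setminus[t]$, yielding $\binom{t}{p}\binom{N-t}{p}$, with $p$ ranging over $\{0,1,\dots,t\}$ since $t\leq N-t$ (WLOG, else swap roles).

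The heart of the fact is part (3), from which (2) follows as a corollary. I would identify $L(X) = \mathrm{Ind}_{\bbS_t \times \bbS_{N-t}}^{\bbS_N} \mathbf{1}$ with the Young permutation module $M^{(N-t,\,t)}$ (assuming $t \leq N/2$; otherwise one works with the conjugate composition). Young's rule then decomposes it into Specht modules according to Kostka numbers:
\begin{equation*}
M^{(N-t,\,t)} \;\simeq\; \bigoplus_{\mu \,\trianglerighteq\, (N-t,t)} K_{\mu,\,(N-t,t)}\, S^\mu.
\end{equation*}
Partitions $\mu$ of $N$ dominating $(N-t,t)$ and having at most two rows must take the form $\mu=[N-q,q]$ for $0\leq q\leq t$, and in each case the Kostka number is exactly $1$ (the unique semistandard tableau of shape $[N-q,q]$ and content $(N-t,t)$ fills the first row with $(N-t)$ ones and enough twos, and the second row with the remaining twos). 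This establishes both the multiplicity-freeness claimed in (3) and, by definition, the Gelfand-pair property (2). An alternative route to (2) alone is the double-coset inversion criterion: each double coset $K\gamma K$ is indexed by an intersection parameter $p$ and is inverse-closed by conjugating with a transposition reversal, making the Hecke algebra commutative.

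Part (5), the spherical function formula, is the most calculation-heavy step and is where I expect the main difficulty. I would take the fact that $(\bbS_N, \bbS_t\times\bbS_{N-t})$ is a Gelfand pair together with the known identification of its spherical functions as dual Hahn polynomials on $\{0,1,\dots,t\}$ with parameters tied to $(N,t,q)$. Concretely, I would realize $V_{[N-q,q]}$ as the orthogonal complement of $V_{[N-q+1,q-1]}$ inside $M^{(N-q,q)}$, exhibit an explicit $K$-invariant unit vector $\ket{k^{[N-q,q]}}$ by symmetrizing a tabloid via a signed alternating sum over "raise-by-one" operators, and then evaluate $\mel{k^{[N-q,q]}}{\rho(x_p)}{k^{[N-q,q]}}$ on a representative $x_p$ of the distance-$p$ orbit. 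The inner product collapses to a count of intersections weighted by binomial coefficients, which after reorganization matches the hypergeometric $_3F_2$ expression \eqref{eq:sphericalfunction}. The cleanest verification is then to check (i) the correct value $\Phi_{[N-q,q]}(0)=1$, (ii) the orthogonality relation with respect to the orbit-size weights $\binom{t}{p}\binom{N-t}{p}$, and (iii) a three-term recurrence in $q$ matching the dual Hahn recurrence; together these uniquely pin down the polynomials and confirm the stated formula. The principal obstacle is the combinatorial bookkeeping of intersection counts needed to pass from the abstract spherical function definition to the closed-form alternating sum.
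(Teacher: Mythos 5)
The paper does not prove this fact; it is explicitly cited as ``a standard series of results'' from Diaconis and Ceccherini-Silberstein et~al., so there is no in-paper argument to compare against. Your reconstruction follows the textbook route and is essentially sound. Parts (1) and (4) are correct and complete: orbit--stabilizer gives the coset identification, and the $\binom{t}{p}\binom{N-t}{p}$ count for the double-coset orbits is right (it is worth stating explicitly that the relevant regime is $t\leq N/2$, which holds throughout the paper since $t=\poly(n)$ and $N=2^n$). Your treatment of (3) via Young's rule is the cleanest way to see both the list of irreps and the multiplicity-freeness at once: identifying $L(X)$ with the permutation module $M^{(N-t,t)}$ and observing that only two-row shapes $[N-q,q]$ with $q\leq t$ admit a semistandard filling of content $(N-t,t)$, each with Kostka number $1$, is exactly right, and (2) then follows by definition. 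Your alternative route to (2) via inverse-closed double cosets (the Gelfand criterion) is also valid and is closer to the proof given in Diaconis's book.

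Part (5) is the one place where your proposal is a roadmap rather than a proof. Saying one would ``exhibit an explicit $K$-invariant unit vector by symmetrizing a tabloid via a signed alternating sum over raise-by-one operators'' and then that ``the inner product collapses to a count'' describes the shape of the argument but does not carry out the combinatorial bookkeeping that is the actual content of the claim. Your fallback verification plan (check $\Phi(0)=1$, check orthogonality against the orbit weights $\binom{t}{p}\binom{N-t}{p}$, check the dual Hahn three-term recurrence in $q$) does uniquely pin down the spherical functions and so would constitute a valid proof if executed, but as written none of those three checks is performed. Since the paper likewise quotes \eqref{eq:sphericalfunction} from the literature without derivation, this is a fair place to stop for a fact being cited rather than proven, but you should be aware that (5) is the only part of your write-up that is not self-contained.
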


    Now we have all the necessary tools to proceed with the proof of our main result in the next section.
    
    \subsection{Proof of theorem \ref{thrm:randomsubsetstates}}

    Let our density matrix obtained by subset state-averaging be denoted by:
    \begin{equation}
        \rho \equiv \E_{S \sim \binom{[N]}{m}} \dyad{S}^{\otimes t}\,.
    \end{equation}
    Also denote the maximally mixed state on the symmetric subspace, which comes from Haar integration, as:
    \begin{align}
        \rho_0 &\equiv \E_{\ket{\phi}\sim\Haar{[N]}} \dyad{\phi}^{\otimes t}\\
         &= \frac{1}{\binom{N+t-1}{t}}\Pi_{\Sym{t}{[N]}}\,.
    \end{align}
    Our goal is to upper-bound the trace distance $\TD{\rho, \rho_0}$. By design, $\rho$ is included in the support of the symmetric subspace, so we only need to care about diagonalizing this matrix inside the symmetric subspace. In fact, we will see that is enough to diagonalize this matrix inside a convenient choice of a smaller subspace inside $\Sym{t}{[N]}$.

    \begin{step}[Density matrix entries in the $\Sunique$ subspace]
        The density matrix $\rho$ is an average over single-subset contributions of the form $\dyad{S}^{\otimes t}$. Let us express this single-subset term in the type basis outlined in fact \ref{fact:typebasis}:
    \begin{align}
        \dyad{S}^{\otimes t} &= \frac{1}{\abs{S}^t}\sum_{\substack{x_1,\dots,x_t \in S\\y_1,\dots,y_t\in S}}\dyad{x_1,x_2,\dots,x_t}{y_1,y_2,\dots,y_t}\\
        &= \frac{t!}{\abs{S}^t}\sum_{\theta,\varphi \in \MSet(S,t)} \frac{1}{\sqrt{\abs{\theta}\abs{\varphi}}}\dyad{\theta}{\varphi}\,.\label{eq:subsetstatetypesum}
    \end{align}

    Let us now restrict our analysis to the subspace of unique types, $\Sunique$, spanned by the unique-type states $\ket{\alpha}$ associated with single subsets $\alpha \in \binom{[N]}{t}$. Then, starting from \eqref{eq:subsetstatetypesum}, the matrix entries in this subspace are:
    \begin{align}\label{eq:restrictedrhoentries}
        \mel{\alpha}{\rho}{\beta} &= \E_{S\sim\binom{[N]}{m}} \frac{t!}{\abs{S}^t}\;\mathop{\mathbf{1}\/}\left[\alpha\cup\beta\subset S\right] & \text{for } \alpha,\beta \in \binom{[N]}{t}\\
        &= \frac{t!}{m^t}\frac{1}{\binom{N}{m}}\abs{\left\{\left.S \in \binom{[N]}{m}\;\right\vert\;\alpha\cup\beta \subset S\right\}}\,. &
    \end{align}
    To count how many subsets of size $m$ contain $\alpha\cup\beta$, we can equivalently count the ways in which we can append elements to $\alpha\cup\beta$ from among the other $[N]\setminus(\alpha\cup\beta)$ elements until we get a set of size $m$, which can be done in $\binom{N - \abs{\alpha\cup\beta}}{m - \abs{\alpha \cup \beta}}$ ways. This gives us the exact matrix entries in this subspace:
    \begin{align}
        \mel{\alpha}{\rho}{\beta} &= \frac{t!}{m^t}\frac{\binom{N - \abs{\alpha\cup\beta}}{m - \abs{\alpha \cup \beta}}}{
        \binom{N}{m}
        }\,.\label{eq:exactMentries}
    \end{align}
    Applying the birthday asymptotics from fact \ref{fact:birthdayasymptotics}, we further refine:
    \begin{align}
        \mel{\alpha}{\rho}{\beta} &= \frac{1}{\binom{N+t-1}{t}}\left(\frac{m}{N}\right)^{\abs{\alpha\cup\beta} - t}\left(1 + O\left(\frac{t^2}{m}\right)\right)\,.\label{eq:correctiontoverm}
    \end{align}
    Note that relative corrections of size $O(t^2/N)$ also appear, but we do not record them since they are subleading with respect to $O(t^2/m)$. Define the matrix $\calM$ as the rescaled density matrix, restricted to the subspace $\Sunique$:
    \begin{equation}
        \calM \equiv \binom{N+t-1}{t}\;\Pi_{\text{unique}}\;\rho\;\Pi_{\text{unique}}\;\;\in\C^{\binom{N}{t}\times\binom{N}{t}}\,,
    \end{equation}
    where the projector onto $\Sunique$ is simply $\Pi_\text{unique}=\sum_{\alpha\in\binom{[N]}{t}}\dyad{\alpha}$. We will ultimately show that the eigenvalues of the matrix $\calM$ are negligibly close to one, for the vast majority of eigenvalues.
    \end{step}

    \begin{observation}
        Naturally, $\bbS_N$ acts by basis permutation on the full symmetric subspace $\Sym{t}{[N]}$, with a structure given by the recently-defined {\em multiset partition algebra} \cite{narayanan2023multiset, orellana2023howe}. By restricting to the dominant unique-type subspace $\Sunique$, we can instead work with a much simpler action of $\bbS_N$ described by the tools of section \ref{sec:RTbackground}. Due to the bijection between subsets $\binom{[N]}{t}$ and the right-cosets of $G=\bbS_N$ by $K=\bbS_{t}\times \bbS_{N-t}$, we identify $\Sunique$ with the homogeneous space $L(X)$, where $X=G/K$, for which the Johnson scheme (fact \ref{fact:johnsonfacts}) applies.
    \end{observation}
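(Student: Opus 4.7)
The observation makes three essentially structural claims that need to be unpacked: (i) the natural $\bbS_N$-action on $\Sym{t}{[N]}$ restricts to an action on $\Sunique$; (ii) the basis labels $\binom{[N]}{t}$ of $\Sunique$ are in bijection with the coset space $X = \bbS_N / (\bbS_t \times \bbS_{N-t})$; and (iii) under this bijection the action of $\bbS_N$ on $\Sunique$ coincides with the permutation representation $L(X)$. Since fact \ref{fact:johnsonfacts} already asserts the Gelfand pair property and enumerates the Johnson-scheme data, the content of this observation is purely an identification between $\Sunique$ and $L(X)$ as $\bbS_N$-modules; no estimates or bounds are needed.

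First I would define the $\bbS_N$-action concretely on $(\C^N)^{\otimes t}$ by the diagonal basis permutation $\sigma \cdot \ket{x_1,\dots,x_t} = \ket{\sigma(x_1),\dots,\sigma(x_t)}$. This commutes with the site-permutation action of $\bbS_t$, so it descends to $\Sym{t}{[N]}$. Expanding via the type basis of fact \ref{fact:typebasis}, a direct computation gives $\sigma \cdot \ket{\theta} = \ket{\sigma_\star \theta}$, where $\sigma_\star$ is the induced action on multisets (relabel each element, preserving multiplicities). Since $\sigma$ is a bijection of $[N]$, the property ``all multiplicities lie in $\{0,1\}$'' is preserved, so $\Sunique$ is an $\bbS_N$-invariant subspace and the action on its natural basis reads $\sigma \cdot \ket{\alpha} = \ket{\sigma(\alpha)}$ for $\alpha \in \binom{[N]}{t}$.

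Next I would establish the bijection $\binom{[N]}{t} \leftrightarrow X$. The subgroup $K = \bbS_t \times \bbS_{N-t}$ is precisely the setwise stabilizer of $[t] \subseteq [N]$, so orbit-stabilizer applied to the transitive $\bbS_N$-action on $t$-subsets yields the canonical map $\sigma K \mapsto \sigma([t])$, which is well-defined on cosets and bijective onto $\binom{[N]}{t}$. Counting $|X| = N!/(t!(N-t)!) = \binom{N}{t} = \dim \Sunique$ confirms dimensional consistency, so the linear map $\ket{\alpha} \mapsto e_{\sigma_\alpha K}$ (where $\sigma_\alpha$ is any group element sending $[t]$ to $\alpha$) is a vector-space isomorphism $\Sunique \simeq L(X)$.

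Finally I would check that this vector-space isomorphism is $\bbS_N$-equivariant: the action $\sigma \cdot \ket{\alpha} = \ket{\sigma(\alpha)}$ from the first step matches the standard permutation action of $\bbS_N$ on $L(X)$ by translation on cosets, up to the harmless left-versus-right bookkeeping convention adopted in section \ref{sec:RTbackground}. This immediately transports all Gelfand-pair machinery from section \ref{sec:RTbackground} onto $\Sunique$, and fact \ref{fact:johnsonfacts} then supplies the multiplicity-free irrep decomposition, the double-coset/distance parametrization, and the closed-form spherical functions. There is no real obstacle: the only care required is maintaining consistent conventions when translating between the type basis, the subset labels, and the coset representatives.
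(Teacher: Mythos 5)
Your proposal is correct and follows the same route the paper takes implicitly: the paper states this as an observation without a detailed proof, pointing to fact \ref{fact:johnsonfacts} for the bijection $\binom{[N]}{t}\leftrightarrow X$, and you simply fill in the routine verifications (closure of $\Sunique$ under the diagonal $\bbS_N$-action, the orbit--stabilizer bijection, and $\bbS_N$-equivariance modulo the left/right coset bookkeeping).
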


    \begin{step}[The matrix $\calM$ is group-circulant]
        First, we establish that the matrix $\calM$ is indeed circulant with respect to the action of $\bbS_N$. This is a consequence of the fact that the uniform distribution over subsets of size $m$ is invariant under basis permutations. Based on \eqref{eq:exactMentries}, the matrix entries $\calM_{\alpha,\beta}$ only depend on the Hamming distance between the two subsets $\alpha,\,\beta$, and thus only on $\abs{\alpha\cup\beta}$. In fact, any matrix on this space whose entries depend only on the subset union (or intersection) size will be group-circulant. 
        
        The reason for this is that the $\bbS_N$ action is distance-transitive, i.e. invariant under diagonal action\footnote{Distance-transitivity has to do with the fact that $(\bbS_N,\;\bbS_t\times\bbS_{N-t})$ is not just a Gelfand pair, but a {\em symmetric} Gelfand pair \cite{ceccherini2008}. In the language of association schemes, it is related to the fact that the Johnson graph is not just distance-regular, but also distance-transitive.}. For subsets $\alpha, \beta \in \binom{[N]}{t}$, we have that $\abs{\alpha \cup \beta} = \abs{\alpha\beta^{-1} \cup [t]}$, where the multiplication is understood as permutation composition (which applies for any $\alpha,\beta$ coset representatives of $G/K$). Another way of seeing this is that the circulant function must be constant over the double-$K$-cosets, which in this setting are indexed by the subset distance, or equivalently by the number of items exchanged between $[t]$ and $[N]\setminus [t]$ by a corresponding permutation (item 4 of fact \ref{fact:johnsonfacts}). Consequently, as a function of the distance $p\in\{0,\dots, t\}$, the circulant function of the matrix $\calM$ is:
        \begin{equation}\label{eq:circulantfunctionM}
            \nu(p) = \left(\frac{m}{N}\right)^p\,\left(1 + O\left(\frac{t^2}{m}\right)\right)\,.
        \end{equation}
    \end{step}
    
    \begin{step}[Diagonalizing the matrix $\calM$ in the $\bbS_N$-Fourier basis]
         Since the matrix $\calM\in\C^{X\times X}$ is circulant, we can now apply fact \ref{lemma:diagonalization} with the specific irrep details from fact \ref{fact:johnsonfacts} to diagonalize it. This gives us the block-diagonal decomposition over the relevant two-row Young diagrams describing the allowed $\bbS_N$-irreps:
         \begin{align}
             W\calM W^\dagger &= \bigoplus_{q=0}^t \tilde\mu_{[N-q,q]}\,I_{d_{[N-q,q]}}\,.
         \end{align}
    Diagonalizing $\calM$ on the subspace $\Sunique=L(X)$ allows us to easily express the trace distance between our density matrix $\rho$ and the maximally mixed state $\rho_0$ under restriction to this specific subspace, namely:
    \begin{align}\label{eq:blockcontributions}
        \TD{\Pi_\text{unique}\,\rho\,\Pi_\text{unique},\;\Pi_\text{unique}\,\rho_0\,\Pi_\text{unique}} &= \frac{1}{2\binom{N+t-1}{t}}\norm{\calM - I_{\abs{X}}}_1 \\
        &= \frac{1}{2\binom{N+t-1}{t}}\sum_{q=0}^t d_{[N-q,q]}\;\abs{\tilde\mu_{[N-q,q]} - 1}\,.
    \end{align}
    The multiplicity $d_{[N-q,q]}$ of each eigenvalue $\tilde\mu_{[N-q,q]}$ is the dimension of the corresponding symmetric group irrep, which can be evaluated by the hook-length formula. In the case of two-row diagrams, the result is particularly simple:
        \begin{align}
            d_{[N-q,q]} &=\left\{ \begin{array}{lr}\binom{N}{q} - \binom{N}{q-1} & 1\leq q\leq t\\
            1 & q=0
            \end{array}
            \right.\,.
        \end{align}
    The eigenvalues are given by fact \ref{lemma:diagonalization} in terms of the circulant function \eqref{eq:circulantfunctionM} of the matrix $\calM$ and the spherical functions \eqref{eq:sphericalfunction} for this $(G, K)$ Gelfand pair. Since both depend only on the subset distance $p \in \{0,\dots, t\}$, we can reformulate this as a sum over distances; note that we have a counting factor of $\binom{t}{p}\binom{N-t}{p}$, which is the number of subsets of size $t$ at a distance $p$ from a reference subset. This gives us the eigenvalues as:
    \begin{align}
       \tilde\mu_{[N-q,q]} &= \sum_{p=0}^t \nu(p)\;\binom{t}{p}\binom{N-t}{p}\;\Phi_{[N-q,q]}(p)\,.
    \end{align}
    \end{step}
    
    \begin{step}[Diagonal contribution of the largest irrep block]
        We will show that it is enough to consider the most sizable eigenblock, corresponding to the Young diagram $[N-t,t]$. We will prove good agreement with the identity on this sub-subspace, which nonetheless fills most of the symmetric subspace.
        
        For this eigenvalue, the calculation simplifies significantly. Using the explicit form of the spherical function \eqref{eq:sphericalfunction}, we have that:
    \begin{align}
        \tilde\mu_{[N-t,t]} &= \sum_{p=0}^t\nu(p)\;\binom{t}{p}\binom{N-t}{p}\sum_{k=0}^p (-1)^k\binom{p}{k}\frac{1}{1-\frac{k}{N-t+1}}\,.
    \end{align}
    The $k$-summation can be expressed in closed form with the help of the combinatorial identity\footnote{Notice that the $k$-sums, which originate from the spherical functions, provide a distance-weighting on what otherwise would be the sum of the entries in a single row of the matrix $\calM$; without these factors, the sum would be superpolynomially large. Intuitively, the spherical functions contain the information about the local geometry of the graph of subsets.}:
    \begin{equation}
        \sum_{k=0}^p\frac{(-1)^k}{1 - \frac{k}{r}}\binom{p}{k} = \frac{(-1)^p}{\binom{r-1}{p}}\quad\quad \text{ for }r\geq p+1\,.
    \end{equation}
    With the substitution $r=N-t+1$, we obtain the particularly simple result:
    \begin{align}
        \tilde\mu_{[N-t,t]} 
        &= \sum_{p=0}^t (-1)^p \nu(p)\binom{t}{p} & \\
        &= \sum_{p=0}^t (-1)^p \left(\frac{m}{N}\right)^p\;\binom{t}{p}\left(1 + O\left(\frac{t^2}{m}\right)\right) & \text{(using the explicit circulant \eqref{eq:circulantfunctionM})}\,.
    \end{align}
    Finally, we notice the binomial sum in $p$, which gives us the desired asymptotic:
    \begin{align}
        \tilde\mu_{[N-t,t]} 
        &= \left(1 - \frac{m}{N}\right)^t\;\left(1 + O\left(\frac{t^2}{m}\right)\right) \\
        &= 1 + O\left(\frac{tm}{N}\right) +  O\left(\frac{t^2}{m}\right)\,.
    \end{align}
    \end{step}

    \begin{step}[Proximity to identity on the largest block is enough]
        Letting the projector onto this eigenblock be $\Pi_{[N-t,t]}$, our calculation of the eigenvalue gives us another restricted trace distance which further refines \eqref{eq:blockcontributions}:
    \begin{equation}
        \TD{\Pi_{[N-t,t]}\,\rho\,\Pi_{[N-t,t]},\;\Pi_{[N-t,t]}\,\rho_0\,\Pi_{[N-t,t]} }\leq O\left(\frac{tm}{N}\right) +  O\left(\frac{t^2}{m}\right)\,.
    \end{equation}
    However, this eigenblock covers a vast majority of all the dimensions in the symmetric subspace, since by birthday asymptotics:
    \begin{align}
        \frac{d_{[N-t,t]}}{\text{dim}\,\Sym{t}{[N]}} &= \frac{\binom{N}{t} - \binom{N}{t-1}}{\binom{N+t-1}{t}}\\
        &= 1 + O\left(\frac{t^2}{N}\right)\,.
    \end{align}
    By lemma \ref{lemma:nearbymatrices} (stated in the next section), we show that these two facts are enough to bound the total trace distance to the maximally mixed state on the full symmetric subspace. The intuition has to do with the fact that a dominant subspace will contain most of the probability mass of the maximally mixed state; if we constrain our unknown density matrix $\rho$ to be close to maximally mixed on this subspace, there is not enough probability mass left in $\rho$ to redistribute outside of this subspace in order to have a large impact on the overall trace distance.
    
    Formally, we use lemma \ref{lemma:nearbymatrices} with the choice of $H$ being the full symmetric subspace and $H_1$ being the subspace of the $[N-t,t]$ irrep. We have that the lemma parameters are $\epsilon=O(t^2/N)$ and $\delta=O(tm/N)+O(t^2/m)$ based on the above calculation. The lemma then gives us the overall proximity:
    \begin{align}
        \TD{\rho, \rho_0} \leq O\left(\frac{tm}{N}\right) + O\left(\frac{t^2}{m}\right)\,,
    \end{align}
    which is precisely what we aimed to bound. This concludes the proof of theorem \ref{thrm:randomsubsetstates}.\hfill\qed
    \end{step}

\subsection{A helpful lemma about nearby density matrices}\label{sec:helpfullemma}

In the proof of the above result, we made use of the following lemma about the case when a density matrix is close to maximally mixed on a dominant subspace:

\begin{lemma}\label{lemma:nearbymatrices}
    Consider the Hilbert space $H = H_1 \oplus H_2$ of dimension $D = d_1 + d_2$, where the two dimensions are uneven such that: 
    \begin{equation}
        \frac{d_2}{D} \leq \epsilon\,.
    \end{equation}
    Let $\rho$ be a density matrix on $H_1\oplus H_2$ such that the $H_1$-block of $\rho$ is close to the $H_1$-block of the maximally mixed state:
    \begin{equation}
        \TD{\Pi_1\rho\Pi_1, \frac{\Pi_1}{D}} \leq  \delta.
    \end{equation}
    In the above, $\Pi_1$ denotes the projector onto $H_1$. Then we have that the full density matrix $\rho$ is close to the full maximally mixed state:
    \begin{equation}
        \TD{\rho, \frac{I}{D}} \leq 2\delta + 2\epsilon\,.
    \end{equation}
\end{lemma}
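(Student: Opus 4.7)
The plan is to work in the block decomposition induced by $H = H_1 \oplus H_2$ and bound each piece of the difference $\rho - I/D$ separately. Writing the triangle inequality for the trace norm gives
\begin{equation*}
\norm{\rho - I/D}_1 \leq \norm{\Pi_1\rho\Pi_1 - \Pi_1/D}_1 + \norm{\Pi_2\rho\Pi_2 - \Pi_2/D}_1 + 2\norm{\Pi_1\rho\Pi_2}_1,
\end{equation*}
where the factor of $2$ accounts for the fact that a purely off-diagonal Hermitian block matrix has trace norm twice that of its upper-right block. The aim is to bound each of the three terms by $O(\epsilon + \delta)$, which after halving yields the stated trace-distance estimate.

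The two diagonal contributions are straightforward. The $H_1$ block is directly bounded by $2\delta$ via the hypothesis. For the $H_2$ block, I would first use that the trace functional is $1$-Lipschitz in trace norm to deduce $\abs{\Tr(\Pi_1\rho) - d_1/D} \leq 2\delta$; then $\Tr(\rho) = 1$ forces $\Tr(\Pi_2\rho\Pi_2) \leq d_2/D + 2\delta \leq \epsilon + 2\delta$, and a second triangle inequality on the two PSD matrices $\Pi_2\rho\Pi_2$ and $\Pi_2/D$ yields $\norm{\Pi_2\rho\Pi_2 - \Pi_2/D}_1 \leq \Tr(\Pi_2\rho\Pi_2) + d_2/D = O(\epsilon + \delta)$.

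The main obstacle is controlling the off-diagonal block $B \equiv \Pi_1\rho\Pi_2$, because a naive PSDness plus Cauchy--Schwarz argument only gives $\norm{B}_1 \leq \sqrt{\Tr(\Pi_1\rho\Pi_1)\cdot\Tr(\Pi_2\rho\Pi_2)} = O(\sqrt{\epsilon+\delta})$, losing a square root. The fix is to exploit both the rank constraint on $B$ (it has at most $d_2 = \dim H_2$ nonzero singular values) and the closeness of $\Pi_1\rho\Pi_1$ to $\Pi_1/D$. Concretely, positive semidefiniteness of $\rho$ restricted to $\spn(u,v)$ for any unit $u \in H_1,\, v \in H_2$ yields the entrywise inequality $\abs{\mel{u}{B}{v}}^2 \leq \mel{u}{\Pi_1\rho\Pi_1}{u}\cdot\mel{v}{\Pi_2\rho\Pi_2}{v}$. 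Applying this along the SVD $B = \sum_i \sigma_i \dyad{u_i}{v_i}$ and then Cauchy--Schwarz over the (at most $d_2$) summation index yields $\norm{B}_1 \leq \sqrt{\Tr(P_U\,\Pi_1\rho\Pi_1)}\,\sqrt{\Tr(P_V\,\Pi_2\rho\Pi_2)}$, where $P_U, P_V$ are rank-at-most-$d_2$ projectors onto the spans of the left/right singular vectors of $B$. The crucial step is then $\Tr(P_U\,\Pi_1\rho\Pi_1) \leq \Tr(P_U\,\Pi_1/D) + \norm{\Pi_1\rho\Pi_1 - \Pi_1/D}_1 \leq d_2/D + 2\delta \leq \epsilon + 2\delta$, which is where closeness of the $H_1$-block to scaled identity is actually used --- it caps the mass that the $d_2$-dimensional singular-vector subspace inside $H_1$ can extract from $\Pi_1\rho\Pi_1$. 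Together with $\Tr(P_V\,\Pi_2\rho\Pi_2) \leq \Tr(\Pi_2\rho\Pi_2) \leq \epsilon + 2\delta$, this produces the desired linear bound $\norm{B}_1 = O(\epsilon + \delta)$, and assembling the three pieces gives $\TD{\rho, I/D} = O(\epsilon + \delta)$ as claimed.
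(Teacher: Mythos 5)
Your proof is correct, and it takes a genuinely different route from the paper's. You decompose $\rho - I/D$ block-by-block and bound the three contributions separately; the interesting work is for the off-diagonal block $C = \Pi_1\rho\Pi_2$, where you combine the $2\times 2$ PSD minor inequality $\abs{\mel{u}{\rho}{v}}^2 \leq \mel{u}{\rho}{u}\mel{v}{\rho}{v}$ (applied along the SVD of $C$) with the rank constraint $\rank C \leq d_2$. That rank constraint is what lets the closeness hypothesis bite: the left singular vectors of $C$ span a $\leq d_2$-dimensional subspace of $H_1$, which by the near-flatness of $\Pi_1\rho\Pi_1$ carries only $O(\epsilon + \delta)$ mass, converting the naive $O(\sqrt{\epsilon+\delta})$ bound into the needed $O(\epsilon+\delta)$. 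The paper instead packages the off-diagonal block and the $H_2$-diagonal correction into a single matrix $\Delta$ (namely $\rho - I/D$ with its $H_1$-block zeroed out), observes $\rank\Delta \leq 2d_2$ by subadditivity, and then runs a compact global spectral argument: $\Delta + I/D$ agrees with $\rho$ up to a $2\delta$-small $H_1$-supported perturbation, so its trace norm is $\leq 1 + 2\delta$, and since adding $I/D$ shifts each of the $\leq 2d_2$ nonzero eigenvalues of $\Delta$ by exactly $1/D$ and fills the rest of the spectrum with $1/D$'s, one can directly extract $\norm{\Delta}_1 \leq 2\delta + 4\epsilon$. Both proofs pivot on the same rank-at-most-$O(d_2)$ observation; yours is arguably more transparent (each block is handled by an elementary inequality), while the paper's is more compact and yields the cleaner constant $2\delta + 2\epsilon$ --- your assembled bound comes out to $2\epsilon + 4\delta$, which is why you correctly hedge with $O(\epsilon+\delta)$. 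For the application in the theorem, where everything is stated with $O(\cdot)$, this difference is immaterial.
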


\begin{proof}
    The proof follows from first principles and is delegated to appendix \ref{sec:proofoflemmanearbymatrices}.
\end{proof}

\section{Generalization to subset-phase states with biased phases}\label{sec:biasedphases}

Here, we briefly mention a natural extension of our result which interpolates between the random subset-phase states of Aaronson et al. \cite{aaronson2022quantum} and the subset states which we analyze in theorem \ref{thrm:randomsubsetstates}.
The basic idea is to consider random subset-phase state where the binary phases are \emph{biased} towards $+1$. 
Specifically, let us study the family of subset-phase states for which the phase function $f:[N]\to \{0,1\}$ is sampled independently at random for every argument $x\in[N]$ from a biased coin flip, such that $\mathbb{P}_f[f(x)=1] = \frac{1+b}{2}$. By modifying the bias\footnote{The regime $b < 0$ is equivalent by an overall sign flip symmetry.} parameter $b\in[0,1]$, we interpolate between the random subset-phase states (for which $b=0$, i.e. equal odds of $\pm 1$ signs) and the subset states ($b= 1$, i.e. phase is always $+1$). Because the values $f(x)$ are independent across $x$, the probability distribution is invariant under permuting the basis labels by $\bbS_N$ action, and the technique behind theorem \ref{thrm:randomsubsetstates} will still apply. Formally, we confirm this in the form of the following corollary:

\begin{corollary}[Biased subset-phase states]
\label{cor:biasedphasesubset}
        Let $t=O(\poly(n))$, let $m$ be in the range $\omega(\poly(n)) < m < o\left(2^n\right)$, and let $N=2^n$. Consider random subset-phase states generated by choosing the subset $S$ uniformly at random from $\binom{[N]}{m}$, and the phases at random from i.i.d. biased Bernoulli distributions such that $\mathbb P_f[f(x)=1]=\frac{1+b}{2}$, for some $b\in[-1, 1]$. Then we have that:
        \begin{equation}\label{eq:biasedtracedistance}
            \TD{\E_{\substack{
            S\sim \binom{[N]}{m}\\f\sim \Ber\left[\frac{1+b}{2}\right]^{[N]}
            }}\dyad{S, f}^{\otimes t},\;\E_{\ket\phi\sim\Haar{[N]}}\dyad{\phi}^{\otimes t}}\leq O\left(\frac{tmb^2}{N}\right) + O\left(\frac{t^2}{m}\right)\,.
        \end{equation}
\end{corollary}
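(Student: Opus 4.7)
The plan is to follow the template of the proof of theorem \ref{thrm:randomsubsetstates}, now taking the expectation over both the random subset $S$ and the independent biased phase function $f$. The only substantive modification is a single combinatorial change to the matrix entries of the restricted density matrix, after which the entire Johnson-scheme machinery applies unchanged.

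First, I would compute the restricted matrix entries of $\rho_b \equiv \E_{S,f}\dyad{S,f}^{\otimes t}$ in the unique-type basis $\{\ket\alpha\}_{\alpha\in\binom{[N]}{t}}$ spanning $\Sunique$. Expanding $\ket{S,f}^{\otimes t}$ in the computational basis and taking overlaps with $\ket\alpha, \ket\beta$ forces the participating tuples $\vec x, \vec y \in S^t$ to be permutations of $\alpha$ and $\beta$ respectively, which again requires $\alpha\cup\beta\subset S$ exactly as in the unbiased case. The cumulative phase contribution $(-1)^{\sum_i f(x_i) + \sum_j f(y_j)}$ depends only on which $z\in S$ appears an odd total number of times across $\vec x$ and $\vec y$; for unique-type permutations, this is exactly the symmetric difference $\alpha\Delta\beta$, since each element of $\alpha\cap\beta$ is counted twice. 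Averaging over the independent biased phases then factorizes across $z\in\alpha\Delta\beta$ and produces an extra multiplicative factor relative to \eqref{eq:correctiontoverm}: namely $(\E_f(-1)^{f(z)})^{\abs{\alpha\Delta\beta}} = (-b)^{2p} = b^{2p}$, where $p$ is the subset Hamming distance between $\alpha$ and $\beta$, using $\abs{\alpha\Delta\beta} = 2p$ since $\abs\alpha=\abs\beta=t$.

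Second, this factor simply replaces $m/N$ with $mb^2/N$ in the circulant function of the restricted matrix $\calM_b$ on $\Sunique$, yielding $\nu_b(p) = (mb^2/N)^p(1 + O(t^2/m))$. The biased i.i.d. phase distribution is still invariant under permuting basis labels by $\bbS_N$, so $\calM_b$ remains circulant over the Johnson scheme and the diagonalization via fact \ref{lemma:diagonalization} and fact \ref{fact:johnsonfacts} applies verbatim. The same combinatorial identity from the dominant-block calculation collapses the $[N-t,t]$ eigenvalue to $\tilde\mu_{[N-t,t]} = (1 - mb^2/N)^t(1 + O(t^2/m)) = 1 + O(tmb^2/N) + O(t^2/m)$. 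Lemma \ref{lemma:nearbymatrices} then promotes proximity to the maximally mixed state on the dominant irrep block $V_{[N-t,t]}$ to the claimed bound \eqref{eq:biasedtracedistance} on the full symmetric subspace, exactly as in the last step of the theorem. I do not anticipate any genuine obstacle beyond bookkeeping; as sanity checks, $b=1$ recovers theorem \ref{thrm:randomsubsetstates}, while $b=0$ collapses $\nu_b$ to $\delta_{p,0}$ and reproduces the bound $O(t^2/m)$ of Aaronson et al. \cite{aaronson2022quantum}.
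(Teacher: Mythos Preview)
Your proposal is correct and follows essentially the same route as the paper's proof: both observe that averaging the biased phases introduces an extra factor $(-b)^{|\alpha\Delta\beta|}=b^{2p}$ in the $\Sunique$ matrix entries, which replaces the circulant decay rate $m/N$ by $mb^2/N$, after which the Johnson-scheme diagonalization and lemma~\ref{lemma:nearbymatrices} go through unchanged. Your explanation via the symmetric difference and the sanity checks at $b=0,1$ are slightly more explicit than the paper's, but the argument is the same.
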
 

This corollary simultaneously generalizes our main theorem and the trace distance calculation of Aaronson et al. for unbiased phases \cite{aaronson2022quantum}.

\begin{proof}
    The argument from the proof of theorem \ref{thrm:randomsubsetstates} carries through almost identically. The only difference comes in evaluating the restricted density matrix entries in the $\Sunique$ basis. Instead of \eqref{eq:exactMentries}, the decay rate of the matrix entries will pick up a factor of the Bernoulli average $\E_f (-1)^{f(x)}=-b$, squared:
    \begin{align}
        \mel{\alpha}{\rho_b}{\beta} &= \frac{t!}{m^t}\mathbb{P}_{S\sim\binom{[N]}{m}}\left[\alpha\cup\beta\subset S\right]\; \E_{f\sim \Ber\left[\frac{1+b}{2}\right]^{[N]}}(-1)^{\sum_{a\in \alpha}f(a) + \sum_{b\in \beta}f(b)}\\
        &= \frac{t!}{m^t}\frac{\binom{N - \abs{\alpha\cup\beta}}{m - \abs{\alpha\cup\beta}}}{\binom{N}{m}} \;\left(\E_{z \sim\Ber\left[\frac{1+b}{2}\right]}(-1)^z\right)^{2(\abs{\alpha\cup\beta}-t)}\\
        &=\frac{1}{\binom{N+t-1}{t}}\left(\frac{mb^2}{N}\right)^{\abs{\alpha\cup\beta} - t}\;\left(1 + \left(\frac{t^2}{m}\right)\right)\,.
    \end{align}
    This has the effect of replacing the matrix entry decay rate $m/N$ with $mb^2/N$ from the point of view of the spectral analysis in the proof of theorem \ref{thrm:randomsubsetstates}, which otherwise continues unaltered.
\end{proof}

Interestingly, this means that if one wishes to use subsets of very large size (say polynomially close to $2^n$) in a pseudorandomness construction, one must compensate with many phases (bias $b=o(1)$). As before, this can be seen as a consequence of the fact the inner product with the $\ket{+^n}$ state must remain negligible to avoid detection by the swap test (i.e. projective measurement) against $\ket{+^n}$.

\section{Discussion and open problems}

This work centered on proving theorem \ref{thrm:randomsubsetstates}, which is the information-theoretic statement that random subset states are information theoretically close to the Haar measure.
Replacing the true randomness with quantum-secure pseudorandomness along the standard hybrid argument in \cite{aaronson2022quantum}, we obtain a pseudorandom, pseudoentangled state ensemble, which is computationally indistinguishable from Haar randomness. This removes the need for random relative phases which played a key role in previous pseudorandom state constructions.

The main technical tools used lies in the application of representation-theoretic methods in the calculation of the trace distance between the average over subset states and the Haar average, in the limit in which the number of copies $t=\poly(n)$ is significantly smaller than the subset size $m$, which is itself significantly smaller than the local dimension $N=2^n$. When averaging over a uniform distribution over $[N]$-subsets of size $m$, it is not surprising that the resulting object has a high degree of (permutation) symmetry. However, it is a distinct feature of the specific symmetries of this problem that we encounter typicality --- such that the largest irrep block is also the largest eigenblock, and also describes most of the problem.

While we approach this problem from the perspective of the representation theory of the symmetric group, we note that an equivalent analysis can be performed with tools from algebraic combinatorics. In particular, the specific spherical functions for this homogeneous space were initially obtained from the study of the association scheme known as the {\em Johnson scheme} in the context of coding theory \cite{delsarte1975}. For example, the specific irrep structure of $\mathrm{Ind}_{\bbS_{t} \times \bbS_{N-t}}^{\bbS_N}1$ describing this problem is identical to the spectral structure of the adjacency matrix of the so-called {\em Johnson graph} $J(N,t)$ --- the natural distance-regular graph defined with subsets of size $t$ as nodes, and whose edges connect subsets which differ by a single element. This is a canonical instance of a more general correspondence: in the group-free setting, such distance-regular graphs and their related association schemes exhibit similar algebraic properties as homogeneous spaces over Gelfand pairs in the group-theoretic setting. For a pedagogical summary of this correspondence, see for example the monograph \cite{ceccherini2008}.

Our work leaves open a number of problems: 
\begin{enumerate}
    \item Are pseudorandom subset states a simpler cryptographic object than pseudorandom subset phase states? Note that the preparation of pseudorandom subset phase states uses both a pseudorandom function (PRF) and a pseudorandom permutation (PRP), while the construction of pseudorandom subset states only uses the PRP. Both of these are of course equivalent to one way functions, so the question is whether there is a simpler primitive than PRPs which can efficiently generate pseudorandom subset states. In other words, one can ask whether the input-output security of pseudorandom permutations is strictly needed for PRS preparation, or whether a meaningful, weaker notion of security suffices.\footnote{We thank Mark Zhandry for suggesting this question.}
    
    \item Can we extend our main result (Theorem \ref{thrm:randomsubsetstates}) to other distributions over subsets, i.e. not just uniformly random subsets of a particular size?  Uniformly random subsets generate states with very simple entanglement structures --- the entanglement across all cuts is simultaneously low, so the states are in some sense ``geometry-free''. In contrast, many physical systems have spatial entanglement structures like area-law or volume-law entanglement. Generalizing our result to other distributions over subsets might allow one to create pseudorandomness with differing entanglement structures which are more physically relevant. 
\end{enumerate}

\section*{Acknowledgments}
We thank Roozbeh Bassirian, Soumik Ghosh, Fermi Ma, Alex May, Tony Metger, Henry Yuen, Mark Zhandry, and Chenyi Zhang for helpful discussions.
A.B. and T.G.T. were supported
in part by the U.S. DOE Office of Science under Award Number DE-SC0020266.
A.B. was supported in part by the DOE QuantISED grant DE-SC0020360 and by the AFOSR under grant FA9550-21-1-0392. 

\vspace{20pt}
\bibliographystyle{alpha}
\bibliography{main}

\vspace{20pt}
\appendix
\section{Proof of lemma \ref{lemma:nearbymatrices}}\label{sec:proofoflemmanearbymatrices}
Let the block decomposition of $\rho_1$ along $H_1\oplus H_2$ be:
    \begin{equation}
        \rho = \begin{pmatrix}
            J & C \\
            C^\dagger & E
        \end{pmatrix}\,.
    \end{equation}
    Define the matrix:
    \begin{equation}
        \Delta = \begin{pmatrix}
            0 & C \\
            C^\dagger & E-I_2/D
        \end{pmatrix}\,.
    \end{equation}
    By rank subadditivity, we have that the rank of $\Delta$ is at most:
    \begin{align}
        \rank\;\Delta &\leq \rank\begin{pmatrix}
            0 & C \\
            0 & 0
        \end{pmatrix} + \rank \begin{pmatrix}
            0 & 0 \\
            C^\dagger & E - I_2/D
        \end{pmatrix} \\
        & \leq d_2 + d_2 = 2 d_2\,.
    \end{align}
    Let the eigenvalues of $\Delta$ be:
    \begin{equation}
        \text{spec}(\Delta) = (\lambda_1,\dots,\lambda_k,\underbrace{0,0,\dots,0}_{D-k\text{ times }})\,,
    \end{equation}with the number $k$ of nonzero eigenvalues equal to the rank $0 \leq k \leq 2d_2$.

    Then the eigenvalues of $\Delta + I/D$ are $\lambda_1+1/D,\dots,\lambda_k+1/D,1/D,1/D,\dots, 1/D$. Denote the trace norm (i.e. the Schatten 1-norm) by $\norm{\cdot}_1$, and notice that since $\rho$ is a density matrix we have that $\norm{\rho}_1=1$.
    
    Using the premise and the triangle inequality, we get an upper bound on the trace norm of $\Delta+I/D$ as:
    \begin{align}
        \abs{\lambda_1+\frac{1}{D}} + \dots + \abs{\lambda_k + \frac{1}{D}} + \frac{D-k}{D} &= \norm{\Delta + I/D}_1 \\
        &\leq \norm{\rho}_1 + \norm{\begin{pmatrix}
            I_1/D & 0 \\
            0 & 0
        \end{pmatrix} - \begin{pmatrix}
            J & 0 \\
            0 & 0
        \end{pmatrix}}_1\\
        &\leq 1 + 2\delta\,.
    \end{align}
    The triangle inequality also gives us a lower bound:
    \begin{align}
        \abs{\lambda_1+\frac{1}{D}} + \dots + \abs{\lambda_k + \frac{1}{D}} &\geq \abs{\lambda_1} +\dots+ \abs{\lambda_k} - \frac{k}{D}\\
        &= \norm{\Delta}_1 - \frac{k}{D}\,.
    \end{align}
    Combining the above two, we get that:
    \begin{align}
        \norm{\Delta}_1 &\leq 2\delta + \frac{2k}{D}\\
        &\leq 2\delta + 4\epsilon\,.
    \end{align}
    This is enough for the trace distance we want, since:
    \begin{align}
        \TD{\rho, I/D} &= \frac{1}{2}\norm{\rho - I/D}_1 \\
        &\leq \frac12\norm{\Delta}_1 + \frac12\norm{\begin{pmatrix}
            I_1/D - J & 0 \\
            0 & 0
        \end{pmatrix}}_1\\
        &\leq 2\delta + 2\epsilon\,.
    \end{align}
    \hfill\qed

\end{document}